\theoremstyle{plain}
\newtheorem{theorem}{Theorem}
\newtheorem{lemma}[theorem]{Lemma}
\newtheorem{observation}[theorem]{Observation}
\newtheorem{property}[theorem]{Property}
\Crefname{observation}{Observation}{Observations}
\Crefname{algorithm}{Algorithm}{Algorithms}
\Crefname{algocf}{Algorithm}{Algorithms}
\Crefname{section}{Section}{Sections}
\Crefname{lemma}{Lemma}{Lemmata}
\Crefname{claim}{Claim}{Claims}
\Crefname{property}{Property}{Properties}
\Crefname{enumi}{Property}{Properties}
\Crefname{figure}{Fig.}{Figs.}
\newcommand{\myparagraph}[1]{\medskip\noindent\textbf{#1}}
\newcommand{\drawing}{disk-link drawing\xspace}
\newcommand{\drawings}{\drawing{s}\xspace}
\title{Convex Grid Drawings of Planar Graphs\\with Constant Edge-Vertex Resolution}
\author{Michael A. Bekos$^1$, Martin~Gronemann$^2$,\\ Fabrizio~Montecchiani$^3$, Antonios~Symvonis$^4$~\vspace{8pt}\\
\small{$^1$Department of Mathematics, University of Ioannina, Ioannina, Greece}\\
\small\texttt{bekos@uoi.gr}
\\
\small{$^2$Algorithms and Complexity Group, TU Wien, Vienna, Austria}\\
\small\texttt{mgronemann@ac.tuwien.ac.at}
\\
\small{$^3$Department of Engineering, University of Perugia, Perugia, Italy}\\
\small\texttt{fabrizio.montecchiani@unipg.it}
\\
\small{$^4$School of Applied Mathematical \& Physical Sciences},\\\small{National Technical University of Athens, Athens, Greece}\\
\small\texttt{symvonis@math.ntua.gr}
}
\date{}
\begin{document}

\maketitle 

\begin{abstract}
We continue the study of the area requirement of convex straight-line grid drawings of $3$-connected plane graphs, which has been intensively investigated in the last decades. Motivated by applications, such as graph editors, we additionally require the obtained drawings to have bounded \emph{edge-vertex resolution}, that is, the closest distance between a vertex and any non-incident edge is lower bounded by a constant that does not depend on the size of the graph. We present a drawing algorithm that takes as input a $3$-connected plane graph with $n$ vertices and $f$ internal faces and computes a convex straight-line drawing with edge-vertex resolution at least $\frac{1}{2}$ on an integer grid of size $(n-2+a) \times (n-2+a)$, where $a = \min\{n-3,f\}$. Our result improves the previously best-known area bound of $(3n-7) \times (3n-7)/2$ by Chrobak, Goodrich and Tamassia. 
\end{abstract}

\section{Introduction}
\label{sec:introduction}

Fáry’s theorem~\cite{Far48} is a fundamental result in planar graph drawing, as it guarantees the existence of a planar straight-line drawing for every planar graph. In such a drawing, the vertices of the graph are mapped to distinct points of the Euclidean plane in such a way that the edges are straight, non-intersecting line-segments. This central result has been independently proved by several researchers in early works~\cite{St51,SH34,Wag36}, some of which also suggested corresponding constructive algorithms requiring high-precision arithmetics; see, e.g.,~\cite{CON85,Tu63}. In this regard, a breakthrough has been introduced by de Fraysseix, Pach and Pollack~\cite{DBLP:conf/stoc/FraysseixPP88} in the late 80’s, who proposed a method that additionally guarantees the obtained drawings to be on an integer grid (thus making the high-precision operations unnecessary). A linear-time implementation of this method was proposed by Chrobak and Payne~\cite{DBLP:journals/ipl/ChrobakP95}. Over the years, several works have studied the area requirement of planar graphs under different settings, by providing bounds on the required size of the underlying grid; see, e.g.,~\cite{DBLP:journals/algorithmica/BattistaTV99,DBLP:journals/order/Felsner01,DBLP:journals/dcg/He97,DBLP:journals/dcg/MiuraNN01,DBLP:conf/soda/Schnyder90}. In the original work by de Fraysseix et al. the size of the underlying grid is $(2n-4) \times (n-2)$ with $n$ being the number of  vertices of the graph; such a bound is asymptotically worst-case optimal, as it is known that there exist $n$-vertex planar graphs that need $\Omega(n) \times \Omega(n)$ area in any of their planar drawings~\cite{DBLP:conf/stoc/FraysseixPP88,DBLP:journals/algorithmica/Kant96}.

The corresponding best-known\footnote{Note that improvements on this bound are known but they are obtained by exploiting either the structure of the input graph~\cite{DBLP:journals/algorithmica/BonichonFM07,DBLP:journals/algorithmica/BattistaTV99,DBLP:journals/order/Felsner01,DBLP:conf/wads/ZhangH03} or higher connectivity~\cite{DBLP:journals/dcg/He97,DBLP:journals/dcg/MiuraNN01}.} upper bound is due to Chrobak and Kant~\cite{DBLP:journals/ijcga/ChrobakK97}, who presented a linear-time algorithm to embed any $n$-vertex planar graph into a grid of size $(n-2) \times (n-2)$; see also~\cite{DBLP:conf/soda/Schnyder90}. In contrast to the work by de Fraysseix, Pach and Pollack~\cite{DBLP:conf/stoc/FraysseixPP88}, which requires an augmentation of the input planar graph to maximal planar, the algorithm by Chrobak and Kant~\cite{DBLP:journals/ipl/ChrobakP95} requires just $3$-connectivity. Furthermore, it guarantees an additional property, which is desired when drawing $3$-connected planar graphs (see, e.g.,~\cite{DBLP:journals/jct/Thomassen84}): the obtained drawings are \emph{convex}, i.e., the boundary of each face is a convex polygon.

\begin{figure}[t]
    \centering
    \begin{subfigure}{.3\textwidth}
    \centering
	\includegraphics[width=0.6\textwidth,page=1]{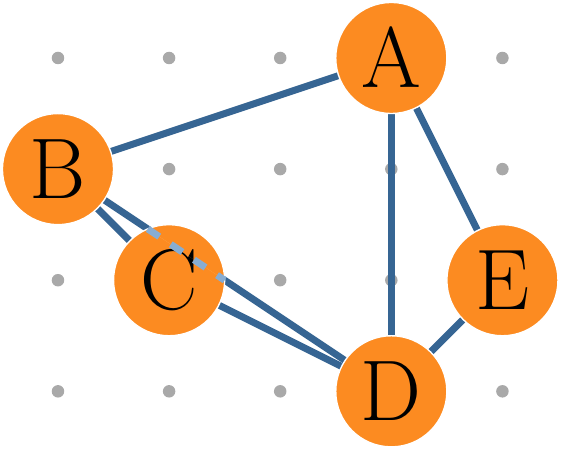}
	\caption{}
	\end{subfigure}
    \begin{subfigure}{.3\textwidth}
    \centering
    \includegraphics[width=0.6\textwidth,page=2]{figs/disk}
	\caption{}
    \end{subfigure}
    \caption{Two planar straight-line grid drawings of the same graph; the drawing in (a) contains an edge-vertex intersection (on vertex C), while the  one in (b) does not as it has edge-vertex resolution at least $\frac{1}{2}$.}
    \label{fig:intro}
\end{figure}

Back in 1996, Chrobak, Goodrich and Tamassia~\cite{DBLP:conf/compgeom/ChrobakGT96} studied the area requirement of $3$-connected planar graphs under an additional requirement, which is essential in practical applications. In particular, they introduced the notion of \emph{edge-vertex resolution}, which measures how close a vertex is to any non-incident edge, and required that the obtained drawings have bounded edge-vertex resolution. This requirement becomes essential in several practical situations, for instance, consider graph editors which usually represent each vertex by an object of a certain size (rather than a point) containing a distinguishing label. Having high edge-vertex resolution allows to avoid  potential overlaps between vertices and edges, in particular, having edge-vertex resolution at least $\frac{1}{2}$ allows each vertex to be represented as an open disk of unit diameter, such that overlaps between vertices and non-incident edges are completely avoided, and simultaneously vertices centered at neighboring grid-points do not overlap (although may touch); see \cref{fig:intro}. In their work~\cite{DBLP:conf/compgeom/ChrobakGT96}, Chrobak, Goodrich and Tamassia claimed that every $3$-connected planar graph admits a convex planar straight-line grid drawing on a grid of size $(3n-7)\times (3n-7)/2$ with edge-vertex resolution at least $\frac{1}{2}$. However, the details of the algorithm (and of its proof) supporting this claim never appeared in the literature. In this regard, very recently, Bekos et al.~\cite{DBLP:journals/comgeo/BekosGMPST21} referred to the drawings with edge-vertex resolution at least $\frac{1}{2}$ as  \emph{disk-link} and proved (among other results) that every  planar graph admits a planar straight-line disk-link drawing on a grid of size $(3n-7) \times (3n-7)/2$. However, the obtained drawing is not necessarily convex. 

\myparagraph{Contribution.} 
We improve both results mentioned above by providing a linear-time algorithm to compute  planar straight-line \drawings  that are convex and that fit on a grid of size $(n-2+a) \times (n-2+a)$, where $a = \min\{f,n-3\}$ and $f$ denotes the number of internal faces of the input graph. In particular, if the input graph is maximal planar (that is, $f=2n-5$), our technique yields drawings of area $(2n-5)\times (2n-5)$. On the other hand, if the input graph is $3$-connected cubic (that is, $f=\frac{n}{2}+1$), then our technique yields drawings of area $(\frac{3n}{2}-1) \times (\frac{3n}{2}-1)$. Our result is summarized in the next theorem.

\begin{theorem}\label{th:main}
Every $3$-connected plane graph with $n$ vertices and $f$ internal faces admits a convex planar straight-line grid drawing with edge-vertex resolution at least $\frac{1}{2}$ on a grid of size $(n-2+a) \times (n-2+a)$, where $a = \min\{f,n-3\}$.  Also, the drawing can be computed in $O(n)$~time.
\end{theorem}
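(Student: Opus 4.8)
The plan is to run the incremental, canonical-ordering–based algorithm of Chrobak and Kant, but to enlarge the picture by a small amount at each step so as to force a local \emph{fatness} condition that implies edge-vertex resolution~$\tfrac12$, and then to charge the total enlargement to $a$ in two independent ways.

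First, I would reduce the resolution requirement to a purely local condition on faces. Call a convex grid drawing \emph{fat} if for every vertex $v$ and every face $F$ having $v$ on its boundary, every edge $e\subseteq\partial F$ not incident to $v$ has distance at least $\tfrac12$ from $v$. A fat convex grid drawing has edge-vertex resolution at least $\tfrac12$: since all vertex pairs are at distance $\ge 1$ on the grid and the drawing is planar and convex, fatness forces the open radius-$\tfrac12$ disk around each vertex $v$ to stay inside $\bigcup_{F\ni v}\overline F$, hence to avoid every non-incident edge $e$ --- either $e$ is a ``far'' edge of some incident face, and then $d(v,e)\ge\tfrac12$ by fatness directly, or $e$ is separated from $v$ by the boundary of $\bigcup_{F\ni v}F$, which consists only of such far edges and of vertices $\ne v$, all at distance $\ge\tfrac12$. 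So it suffices to build a fat convex grid drawing on a grid of size $(n-2+a)\times(n-2+a)$.

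Second, I would build such a drawing incrementally. Compute in $O(n)$ time a canonical ordering $V_1,\dots,V_m$ of the $3$-connected plane graph $G$, each $V_k$ being a single vertex or a chain whose internal vertices have degree two in $G_k=G[V_1\cup\cdots\cup V_k]$, as used by Chrobak and Kant, and maintain a convex grid drawing $\Gamma_k$ of $G_k$ whose outer boundary is the usual $y$-monotone ``tent'' from $v_1$ to $v_2$, that is fat for all faces of $G_k$, and whose side length is at most (number of placed vertices)$\,-\,2+$(slack used so far). To obtain $\Gamma_k$ from $\Gamma_{k-1}$, apply the Chrobak--Kant horizontal shift of the subsets depending on the endpoints of the pocket spanned by $V_k$; then insert a small, explicitly bounded set of new grid lines --- extra rows above the current picture and/or extra columns inside the pocket --- and place $V_k$ (a chain on a convex arc just above the pocket, or a single vertex above it) at grid points chosen so that the faces newly created at step $k$ become fat and no previously-fat face is spoiled. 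Inserting a whole grid line is an affine stretch, so it preserves collinearities and the side of every vertex relative to every line; combined with the standard Chrobak--Kant choice of $V_k$'s position (keeping all face angles below~$\pi$), convexity is maintained and the invariant goes through. The accounting then uses two bounds on the inserted lines: at most one per face created at step $k$ in each dimension, summing to $\sum_k(\text{faces created at step }k)=f$; and at most one per step in each dimension, with a global refinement (the first groups, and the last group, need none, and $|V_1|=2$ forces $m\le n-1$) bringing the total to at most $n-3$. Hence the slack is at most $\min\{f,n-3\}=a$ in each dimension, so $\Gamma_m$ has size at most $(n-2+a)\times(n-2+a)$; since the canonical ordering, the lazy shifts via the Chrobak--Payne relative-coordinate structure, and the per-step insertions all take $O(n)$ time in total, the drawing is computed in linear time.

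The main obstacle is the per-step sub-lemma that a constant number of inserted grid lines suffices to restore fatness. When $V_k$ is a single vertex $z$ sitting above a long sub-path $c_p,\dots,c_q$ of the current outer boundary, the new triangles $z c_i c_{i+1}$ and the faces lying just below the $c_i$'s can be arbitrarily thin in the unmodified drawing, and all of them must be fattened simultaneously; doing so with only $O(1)$ extra height for this step and only $O(1)$ extra columns in the pocket (or, for the sharper $f$-bound, just one extra column per new triangle), while keeping the outer boundary $y$-monotone and every face convex, and while not re-thinning faces created in earlier steps, requires a careful joint choice of $z$'s height and $x$-coordinate together with the inserted lines. Establishing that such a choice always exists, and handling the chain case and the cross-step interactions within the same framework, is the technical heart of the proof.
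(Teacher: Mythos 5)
Your overall strategy --- run Chrobak--Kant, insert extra grid lines at each step to force a local face-based ``fatness'' condition, and charge the inserted lines against $f$ and against $n-3$ --- is the same as the paper's, and your reduction of edge-vertex resolution to a condition on the faces incident to each vertex matches the paper's Lemma~\ref{lem:properties}. However, there are two genuine gaps. First, your justification that convexity survives the insertions is wrong as stated: inserting a grid line is \emph{not} an affine stretch of the plane but a piecewise translation, and it does not preserve collinearity or convex angles for faces straddling the cut. This is precisely why the shift-set machinery exists; the paper proves convexity is preserved (Lemmata~\ref{lem:martin-property} and~\ref{lem:planarity-convexity}) by showing that each extra unit shift along a shift-set can be simulated by attaching dummy degree-2/degree-3 singletons and invoking the correctness of the original algorithm, and by choosing the cut positions at carefully defined \emph{pivot vertices} (one per pair of consecutive neighbors of the new singleton). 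Without an argument of this kind, ``convexity is maintained and the invariant goes through'' is an assertion, not a proof --- and you yourself flag the per-step sub-lemma as the unproven technical heart.

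Second, your area accounting for the $n-3$ bound does not work. A singleton of degree $d$ in $G_k$ creates $d-1$ new inner faces, and each consecutive pair of its neighbors on the contour may need its own extra unit of horizontal separation; the paper's construction indeed inserts $d-2$ extra columns for such a step, not $O(1)$. So ``at most one inserted line per step, hence at most $n-3$ in total'' is based on a per-step cost that the construction cannot achieve, and it is inconsistent with your own ``one extra column per new triangle'' variant used for the $f$ bound --- these are two different constructions, and you need a single construction satisfying both bounds simultaneously. The correct accounting charges the $\sum_k(\deg_{G_k}(P_k)-2)$ extra shifts of one fixed construction in two ways: to the \emph{red} (non-contour) edges incident to each singleton, which form a forest with at most $n-3$ edges, and to all but one of the internal faces closed by each singleton, giving at most $f$. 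Also note that the paper never inserts extra rows: the contour condition keeps the drawing inside a right isosceles triangle, so the height automatically tracks the width; inserting rows as you propose would require a separate (and unprovided) vertical accounting.
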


\myparagraph{Related work.} Bárány and Rote~\cite{Barany2006} prove that every  $3$-connected planar graph has a \emph{strictly} convex drawing on a quartic grid, improving a previous result by Rote~\cite{DBLP:conf/soda/Rote05}. We recall that a planar drawing is strictly convex if each face is bounded by a strictly convex polygon.  We point the interested reader to the surveys by Di Battista and Frati~\cite{Battista2013,DBLP:journals/corr/BattistaF14} for additional references and results concerning  convex and strictly-convex drawings of planar graphs in small area.

Concerning the edge-vertex resolution requirement there exist multiple related streams of research. A \emph{closed rectangle-of-influence} (closed RI for short) drawing is a planar straight-line drawing such that no vertex lies in the axis-parallel rectangle (including the boundary) defined by the two ends of every edge~\cite{DBLP:conf/gd/AlamdariB12,DBLP:journals/dm/BarriereH12,DBLP:conf/gd/BiedlBM99,DBLP:conf/cccg/BiedlLMV16,DBLP:journals/dcg/MiuraMN09,DBLP:journals/comgeo/SadasivamZ11}. Any closed RI drawing whose vertices are at integer coordinates can be seen as a \drawing. This implies that \drawings (not necessarily convex) in quadratic area exist for several classes of plane graphs~\cite{DBLP:journals/dm/BarriereH12,DBLP:conf/gd/BiedlBM99,DBLP:journals/comgeo/SadasivamZ11}. However, any plane graph with a filled $3$-cycle does not admit a closed RI drawing~\cite{DBLP:conf/gd/BiedlBM99}. Another related direction considers drawings where vertices are objects with integer coordinates and the edges are fat segments~\cite{DBLP:journals/jgaa/BarequetGR04}. In such drawings the edges do not connect the centers of the incident vertex-disks but rather simply enter these vertex-objects through varying angles. Duncan et al.~\cite{DBLP:journals/ijfcs/DuncanEKW06} also use fat edges but, in contrast to~\cite{DBLP:journals/jgaa/BarequetGR04}, they do not compute a drawing from scratch but rather try to extend an existing one without modifying the area of the layout. Van Kreveld~\cite{DBLP:journals/comgeo/Kreveld11} studies \emph{bold drawings}, in which vertices are drawn as disks of radius $r$ and edges as rectangles of width~$w$, where $r>w/2$. A bold drawing is \emph{good} if all of its vertices and edges are at least partially visible (neither a vertex disk nor an edge-rectangle is completely hidden by overlapping edges). Although \drawing{s} form a special case of bold drawings in which $r=\frac{1}{2}-\varepsilon$ and $w=2\varepsilon$ (for some sufficiently small $\varepsilon>0$), the research on bold drawings has mainly focused on finding feasible values of $r$ and $w$, rather than on area bounds for fixed values of $r$ and~$w$. 

\section{Preliminaries}
\label{sec:basics}

\noindent\textbf{Basic definitions.} A \emph{drawing} of a graph maps each vertex to a distinct point of the Euclidean plane, and each edge to a Jordan arc connecting its endpoints. A drawing of a graph is \emph{planar} if no two edges intersect, except possibly at a common endpoint. A planar drawing partitions the plane into topologically connected regions, which are commonly called \emph{faces}. The unbounded region is called \emph{outer face}; any other face is an \emph{internal face}.  A graph is \emph{planar} if and and only if it admits a planar drawing. A \emph{planar embedding} of a planar graph is an equivalence class of topologically-equivalent (i.e., isotopic) planar drawings. A planar graph with a given planar embedding is a \emph{plane graph}.

A drawing is \emph{straight-line} if the Jordan arcs representing the edges are straight-line segments. The \emph{slope} of a line $\ell$ is the tangent of the minimum-angle that a horizontal line needs to be rotated in order to make it overlap with $\ell$; a positive slope corresponds to a counter-clockwise rotation, while a negative one corresponds to a clockwise rotation. The \emph{slope} of a segment is the slope of the supporting line containing it. A \emph{grid drawing} of a graph is a straight-line drawing whose vertices are at integer coordinates.  We say that the grid size of a grid drawing $\Gamma$ is $W \times H$ (or, equivalently, the area of $\Gamma$ is $W \times H$), if the minimum axis-aligned box containing $\Gamma$ has side lengths $W-1$ and $H-1$. Moreover, for a vertex $v$ of a graph $G$, we denote by $x_\Gamma(v)$ and by $y_\Gamma(v)$ the $x$- and $y$-coordinate of $v$ in drawing $\Gamma$ of $G$, respectively. When the reference to $\Gamma$ is clear from the context, we simply write $x(v)$ and  $y(v)$.

\medskip\noindent\textbf{Disk-link drawings.} The \emph{edge-vertex resolution} of a grid drawing of a graph is the minimum Euclidean distance between a point representing a vertex and any edge that is not incident to that vertex. A \emph{\drawing} of a graph is a grid drawing of edge-vertex resolution at least $\frac{1}{2}$. Observe that, in a \drawing~$\Gamma$, for each vertex $v$ one can draw an open disk with radius~$\overline{\rho}\le \frac{1}{2}$ centered at the point of $\Gamma$ representing $v$, and this results in a diagram in which no two disks intersect, and no disk is intersected by a non-incident edge. For simplicity, we assume that $\overline{\rho}=\frac{1}{2}$, i.e., the disks have unit diameter. This assumption is not restrictive, since our results carry over for any constant radius up to some multiplicative constant factor for the area. 

\medskip\noindent\textbf{Canonical order.} Even though we assume familiarity with basic concepts of planar graph drawing~\cite{DBLP:books/ws/NishizekiR04,DBLP:reference/crc/Vismara13}, we recall in this section a key concept that is central in several algorithms for producing planar grid drawings of plane graphs, e.g.,~\cite{DBLP:conf/compgeom/ChrobakGT96,DBLP:conf/stoc/FraysseixPP88,DBLP:journals/algorithmica/Kant96}. Namely, the \emph{canonical order}~\cite{DBLP:journals/algorithmica/Kant96} for 3-connected plane graphs, which is defined as follows: Let $G$ be a $3$-connected plane graph with $n$ vertices and let $\pi = (P_0,\ldots,P_m)$ be a partition of the vertex-set of $G$ into paths, such that $P_0 = \{v_1,v_2\}$, $P_m=\{v_n\}$, and edges $(v_1,v_2)$ and $(v_1,v_n)$ exist and belong to the outer face of $G$. For $k=0,\ldots,m$, let $G_k$ be the subgraph induced by $\cup_{i=0}^k P_i$ and denote by $C_k$ the \emph{contour} of $G_k$ defined as follows: If $k=0$, then $C_0$ is the edge $(v_1,v_2)$, while if $k>0$, then $C_k$ is the path from $v_1$ to $v_2$ obtained by  removing $(v_1,v_2)$ from the cycle delimiting the outer face of $G_k$. We say that $\pi$ is a \emph{canonical order} of $G$ if for each $k=1,\ldots,m-1$ the following properties hold: %
\begin{enumerate}[P.1]
\item\label{prp:1}$G_k$ is biconnected and internally $3$-connected,
\item\label{prp:2}all neighbors of $P_k$ in $G_{k-1}$ are on $C_{k-1}$,
\item\label{prp:3}either $P_k$ is a \emph{singleton} (that is, $|P_k|=1$), or $P_k$ is a \emph{chain} (that is, $|P_k|>1$) and the degree of each vertex of $P_k$ is $2$ in $G_k$, and
\item\label{prp:4}all vertices of $P_k$ with $0\leq k < m$ have at least one neighbor in $P_j$ for some $j > k$.
\end{enumerate}
A canonical order of $G$ can be computed in linear time~\cite{DBLP:journals/algorithmica/Kant96}.
A vertex on contour $C_k$ is called \emph{saturated} in $G_k$ if and only if it is not adjacent to a vertex belonging to a path $P_{k'}$ with $k' > k$. 

\section{Convex planar grid \drawings}\label{sec:main}

In this section, we present our algorithm to compute convex planar grid \drawings of 3-connected plane graphs. As our algorithm builds upon an algorithm by Chrobak and Kant~\cite{DBLP:journals/ijcga/ChrobakK97}, which yields convex planar grid drawings (that are not necessarily disk-link) of 3-connected plane graphs with $n$ vertices on grids of size $(n-2) \times (n-2)$, for completeness, we first recall its basic ingredients before we enter the details of our approach.

\subsection{The algorithm by Chrobak and Kant~\cite{DBLP:journals/ijcga/ChrobakK97}.}

This algorithm is incrementally computing a convex planar drawing $\Gamma$ of a 3-connected plane graph $G$ using a canonical order $\pi = (P_0,\ldots,P_m)$ of $G$. The drawing $\Gamma$ has integer grid coordinates and fits in a grid of size $(n-2)\times (n-2)$. In order to ease the presentation, we define a Schnyder-like~\cite{felsner,DBLP:conf/soda/Schnyder90} $4$-coloring of the edges of $G$ based on the canonical order $\pi$. $G_0$ consists of a single edge $(v_1,v_2)$, which is assigned the black color. Assuming that a $4$-coloring has been constructed for $G_{k-1}$ with $k=1,\ldots,m$, we extend it for $G_k$ as follows (see \cref{fig:contour-condition}): We first color the edges of $G_{k}$ that do not belong to $G_{k-1}$ and are on contour $C_{k}$. We color the first such edge encountered in a traversal of $C_k$ from $v_1$ to $v_2$ blue, the last one green and all remaining ones (i.e., those having both endpoints in $P_k$, when $P_k$ is a chain) black. Similar to the Schnyder coloring of maximal planar graphs, we assign the color red to the remaining edges of $G_k$ that do not belong to $G_{k-1}$ (i.e., those that are incident to $P_k$ and are not part of contour $C_k$). Note that the latter case only arises if $P_k$ is a singleton by Property~P.\ref{prp:3} of the canonical~order. 

\begin{figure}[t!]
	\centering
	\begin{subfigure}{.48\textwidth}
	   \centering
	\includegraphics[page=7]{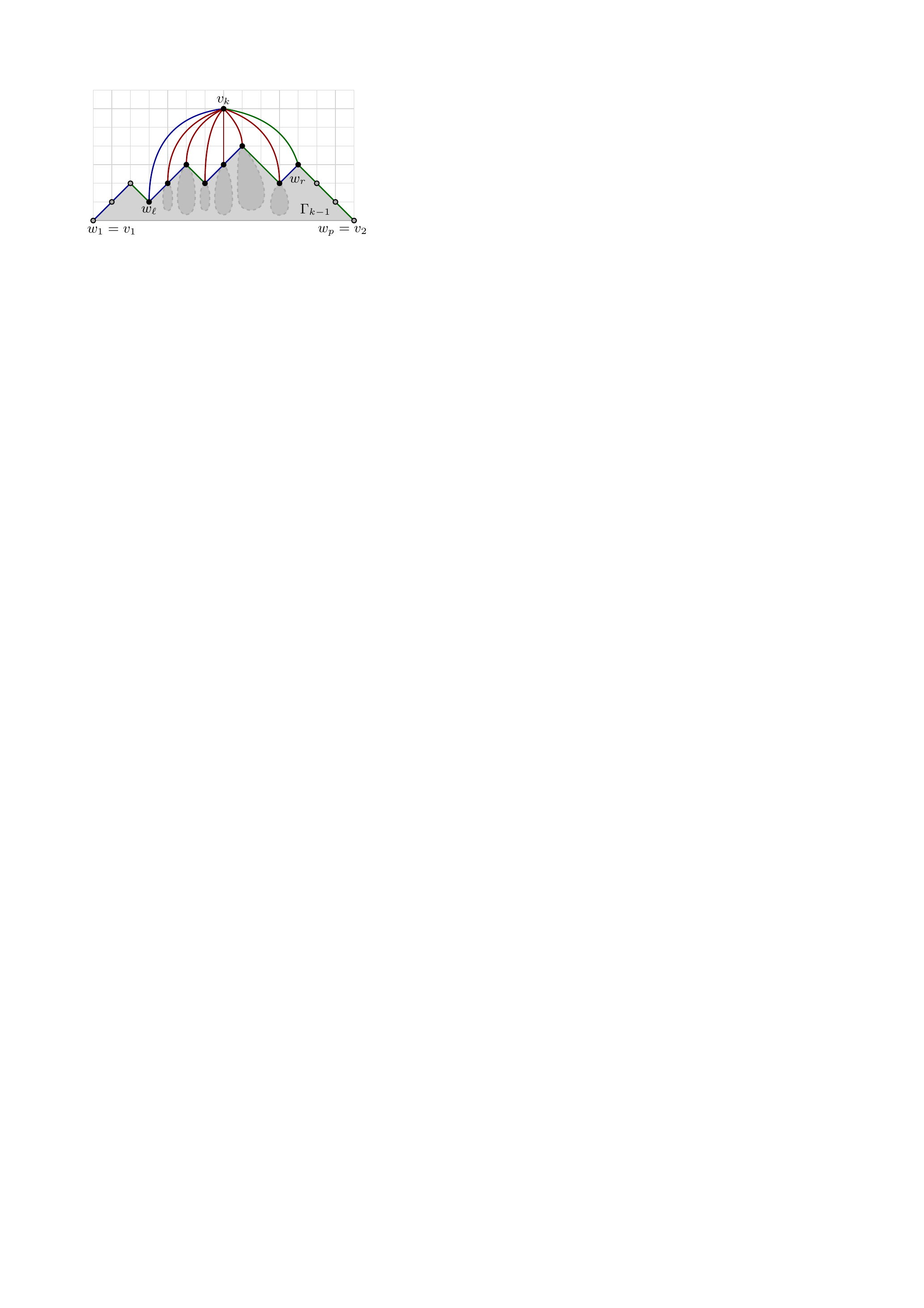}
	\subcaption{Contour condition}
	\label{fig:contour-condition}
	\end{subfigure}
	\hfil
	\begin{subfigure}{.48\textwidth}
	\centering
	\includegraphics[page=8]{figs/canonical}
	\subcaption{Placement of $P_k$ in $\Gamma_{k-1}$}
	\label{fig:original}
	\end{subfigure}
	\caption{Introducing a singleton $P_k$ in $\Gamma_{k-1}$ according to the algorithm by Chrobak and Kant~\cite{DBLP:journals/ijcga/ChrobakK97}; the white-filled vertices are the critical vertices $w_{\ell'}$ and $w_{r'}$.}
	\label{fig:chrobak-kant}
\end{figure}

Based on the canonical order $\pi$ of $G$, drawing $\Gamma$ is constructed as follows: Initially, the vertices $v_1$ and $v_2$ of $P_0$ are placed at points $(0,0)$ and $(1,0)$, respectively. For $k=1,\ldots,m$, assume that a planar convex grid drawing $\Gamma_{k-1}$ of $G_{k-1}$ has been constructed in which the edges of contour $C_{k-1}$ are drawn as straight-line segments with slopes $0$, $-1$ or in $[1,+\infty]$ (\emph{contour condition}; see \cref{fig:contour-condition}). In particular, the slope of each blue edge of $C_{k-1}$ is at least $1$, the slope of each black edge of $C_{k-1}$ is $0$, while the slope of each green edge of $C_{k-1}$ is $-1$ (note that $C_{k-1}$ does not contain any red edge by definition). Also, each vertex $v$ in $G_{k-1}$ has been associated with a so-called \emph{shift-set}, denoted by $S(v)$; the shift-sets of $v_1$ and $v_2$ of path $P_0$ are singletons such that $S(v_1)=\{v_1\}$ and $S(v_2)=\{v_2\}$.

Let $(w_1,\ldots,w_p)$ be the vertices of  $C_{k-1}$ from left to right in $\Gamma_{k-1}$, where $w_1=v_1$ and $w_p=v_2$. For the next path $P_k = \{z_1,\ldots,z_q \}$ in $\pi$, let $w_\ell$ and $w_r$ be the leftmost and rightmost neighbors of $P_k$ on $C_{k-1}$ in $\Gamma_{k-1}$, where $1 \leq  \ell < r \leq p$. For the definition of the shift-set $S(v)$ of each vertex $v$ in $P_{k}$, the algorithm identifies two \emph{critical} vertices on the contour $C_{k-1}$, which we denote by $w_{\ell'}$ and $w_{r'}$, such that $\ell <  \ell' \leq r$ and $\ell \leq  r' < r$  (refer to the white-filled vertices of \Cref{fig:chrobak-kant}); note that it is possible to have $w_{\ell'} = w_{r'}$. Vertex $w_{\ell'}$ is the first vertex encountered in the traversal of $C_{k-1}$ starting from $w_{\ell+1}$ towards $w_r$ that either has a neighbor in $P_k$ or the edge $(w_{\ell'}, w_{\ell'+1})$ is blue or black; note that it is possible to have $w_{\ell'} = w_r$. Symmetrically, vertex $w_{r'}$ is the first vertex encountered in the traversal of $C_{k-1}$ starting from $w_{r-1}$ towards $w_\ell$ that either has a neighbor in $P_k$ or the edge $(w_{r'-1},w_{r'})$ is green or black; note that it is possible to have  $w_{r'} = w_\ell$. We refer to $w_{\ell'}$ and $w_{r'}$ as the \emph{left-critical} and \emph{right-critical vertices} of $P_k$. More importantly, since each internal face of $\Gamma_k$ is convex, in the case where $P_k$ is either a chain or a singleton of degree~$2$ in $G_k$, vertices $w_{\ell'}$ and $w_{r'}$ are either consecutive along $C_{k-1}$ or $w_{\ell'}=w_{r'}$ holds.  Once $w_{\ell'}$ and $w_{r'}$ have been identified, the algorithm sets the shift-sets of the vertices $z_1,\ldots,z_q$ of $P_k$ as follows: 
\begin{equation}\label{eq:shift-sets-1}
S(z_1)= \{z_1\} \cup \bigcup_{i=\ell'}^{r'}S(w_i), \qquad \text{and} \qquad S(z_i)= \{z_i\},\text{ for } i=2,\ldots,q
\end{equation}
Furthermore, to guarantee that the resulting drawing is convex, the algorithm updates the shift-sets of $w_\ell$ and $w_r$ of $G_{k-1}$ as follows:
\begin{equation}\label{eq:shift-sets-2}
S(w_\ell)= \bigcup_{i=\ell}^{\ell'-1}S(w_i), \qquad \text{and} \qquad S(w_r)= \bigcup_{i=r'+1}^{r}S(w_i).
\end{equation}
To compute the drawing $\Gamma_k$, the algorithm distinguishes two cases. If $w_\ell$ is saturated in $G_k$ (i.e., $z_1$ is the last neighbor of $w_\ell$ that has not been drawn), then the $x$-coordinate of $z_1$ is the same as the one of $w_\ell$, that is, $x(z_1) = x(w_\ell)$. Otherwise, $x(z_1) = x(w_\ell)+1$. To accommodate the vertices of $P_k$ and to avoid edge-overlaps, the algorithm shifts each vertex in 
\begin{equation}\label{eq:shift-right}
\bigcup_{i=r}^{p} S(w_i).    
\end{equation}
by $q$ units to the right (see \cref{fig:original}). Then, the algorithm places vertex $z_q$ at $(x(z_1)+q-1, y(w_r) + x(w_r) - (x(z_1)+q-1))$, i.e., at the intersection of the line of slope $-1$ through $w_r$ with the vertical line through point $x(z_1)+q-1$. Note that this is a grid point above $w_\ell$ and $ w_r$ due to the contour condition and the shifting of $S(w_r)$. For  $i=1,\ldots,q-1$, vertex $z_i$ of $P_k$ is placed $q-i$ units to the left of $z_q$. Since $(w_\ell,z_1)$ is blue, $(z_q,w_r)$ is green, and the internal edges of $P_k$ (if any) are black, the contour condition of the algorithm is, by construction, maintained after the placement of the vertices of $P_k$ in $\Gamma_k$.

\medskip

\noindent The contour condition together with the shifting procedure described above guarantee \cref{prp:slopes} for the slopes of the edges in $\Gamma_k$.

\begin{figure}[t]
    \centering
    \begin{subfigure}{.48\textwidth}
    \centering
	\includegraphics[page=1]{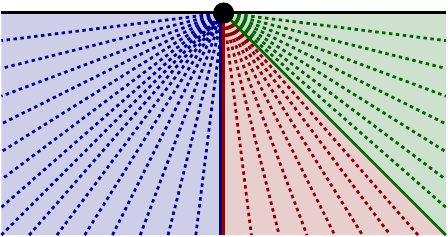}
	\caption{}
	\label{fig:slopes}
	\end{subfigure}
    \begin{subfigure}{.48\textwidth}
    \centering
	\includegraphics[page=1]{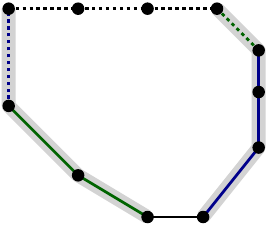}
	\caption{}
	\label{fig:face}
	\end{subfigure}
    \caption{Illustration of (a)~\Cref{prp:slopes} and (b)~\cref{prp:face}.}
\end{figure}

\begin{property}[Chrobak and Kant~\cite{DBLP:journals/ijcga/ChrobakK97}]\label{prp:slopes}
A shift can only decrease the slope of a blue edge, increase the slope of a green edge, while the black and the red edges are \emph{rigid}, i.e., they maintain their slope. As a result, in $\Gamma_k$ (see \cref{fig:slopes}): 
\begin{itemize}[--]\setlength{\itemsep}{0pt}
\item the slope of each blue edge ranges in $(0,+\infty]$,
\item the slope of each black edge is $0$, 
\item the slope of each green edge ranges in $[-1,0)$, and
\item the slope of each red edge ranges in $[-\infty,-1)$.  
\end{itemize}
\end{property}

\noindent Since each face of $\Gamma_k$ is formed when a path $P_{k'}$ with $k' \leq k$ of canonical order $\pi$ is introduced, \Cref{prp:slopes} combined with the contour condition and Property~P.\ref{prp:4} of the canonical order imply the following property for the shape of each face~in~$\Gamma_k$.

\begin{property}\label{prp:face}
Let $f$ be a face in $\Gamma_k$. Then, a counter-clockwise traversal of $f$ starting from its leftmost vertex that is the bottommost when it is not uniquely defined consists of the following boundary parts (see \cref{fig:face}): 
\begin{enumerate}[i.]\setlength{\itemsep}{0pt}
\item\label{f:1}a strictly descendant path of green edges (possibly empty), 
\item\label{f:2}a black edge (possible non existent),
\item\label{f:3}a strictly ascendant path of blue edges (possible empty),
\item\label{f:4}a green or red edge,
\item\label{f:5}a horizontal path of black edges (possibly empty), and 
\item\label{f:6}a blue or red edge.
\end{enumerate}
\end{property}

\noindent Boundary parts~(\ref{f:1})--(\ref{f:3}) in \cref{prp:face} form the \emph{lower envelope} of $f$ (solid in \cref{fig:face}). The \emph{upper envelope} of $f$ is formed by boundary parts~(\ref{f:4})--(\ref{f:6})  (dotted in \cref{fig:face}). The latter is introduced in $\Gamma_k$ when a path of the canonical order is placed. Thus, the upper envelope cannot contain black and red edges simultaneously (by Property~P.\ref{prp:3} of canonical order). Finally, boundary parts~(\ref{f:3}) and~(\ref{f:4}) form the \emph{right envelope} of $f$, while (\ref{f:6}) and~(\ref{f:1}) form the \emph{left envelope}~of~$f$ (gray-highlighted in \cref{fig:face}). We next state some lemmata regarding the ``behavior'' of the algorithm by Chrobak and Kant~\cite{DBLP:journals/ijcga/ChrobakK97} that are employed in the proof of correctness of our modification. 

\begin{lemma}\label{lem:same-y-coordinate}
Let $u$ and $v$ be two distinct vertices of $G_k$ belonging to the same face~$f$ of $\Gamma_k$. If $u$ and $v$ have the same $y$-coordinate in $\Gamma_k$ with $x(u)<x(v)$, then either $u$ and $v$ are connected by a path of black edges of $f$ or the $x$-coordinate of the bottommost vertex/vertices of $f$ is/are in the interval $(x(u),x(v)]$.
\end{lemma}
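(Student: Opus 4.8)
The plan is to argue by induction on $k$, following the incremental construction of $\Gamma_k$, and to exploit the very precise description of face shapes given in \cref{prp:face}. Fix a face $f$ of $\Gamma_k$ and two vertices $u,v$ on its boundary with $y(u)=y(v)$ and $x(u)<x(v)$. Recall from \cref{prp:face} that a counter-clockwise traversal of $f$ from its leftmost-bottommost vertex decomposes into a descending green path, a black edge, an ascending blue path (these three forming the lower envelope), followed by one green/red edge, a horizontal black path, and one blue/red edge (the upper envelope). By \cref{prp:slopes}, green edges have slope in $[-1,0)$, blue edges slope in $(0,+\infty]$, black edges slope $0$, and red edges slope in $[-\infty,-1)$. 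Thus the only horizontal boundary edges of $f$ are black ones: they occur either inside the lower envelope (the single black edge of item~(\ref{f:2})) or inside the horizontal black path of the upper envelope (item~(\ref{f:5})).

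First I would handle the case where $u$ and $v$ both lie on the lower envelope of $f$. Since the lower envelope is the concatenation of a strictly descending green path, then a horizontal black edge, then a strictly ascending blue path, the $y$-coordinates along it are first strictly decreasing, then constant on the black edge, then strictly increasing; in particular any two boundary vertices of the lower envelope with the same $y$-coordinate must both be endpoints of that unique black edge, hence connected by a (single) black edge of $f$, which is the first alternative in the statement. The symmetric-looking-but-more-delicate case is when $u$ and $v$ both lie on the upper envelope. Here the black path of item~(\ref{f:5}) is horizontal, so if $u$ and $v$ are both on this black path they are joined by a black path of $f$ and we are again done. If instead they are not both on the black path, then — using that the upper envelope's non-black edges are green/red on the left side (slopes in $[-\infty,0)$, i.e.\ strictly descending read left-to-right) and blue/red on the right side (slopes in $(0,+\infty]\cup[-\infty,-1)$, but read appropriately) — the only way to get two points of equal height is if one of $u,v$ is the left endpoint of item~(\ref{f:4})/black-path and the other the right endpoint; but then the bottommost vertex of $f$ (which lies on the lower envelope, strictly below the upper envelope by convexity) has $x$-coordinate between them. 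Making this last sub-case precise is where I would spend the most care: I need to argue that when $u$ and $v$ straddle the top black path, the lowest point of the convex polygon $f$ sits horizontally between $x(u)$ and $x(v)$, with the right endpoint $x(v)$ included in the half-open interval, which is a convexity argument about where the supporting horizontal line of the bottom meets $f$.

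The remaining case is the mixed one: $u$ on the lower envelope and $v$ on the upper envelope (or vice versa). Since $f$ is convex and the lower and upper envelopes meet only at the leftmost and rightmost vertices of $f$, and since by \cref{prp:face} the leftmost vertex is the unique bottommost one (tie broken downward), the interior of the lower envelope lies strictly below the interior of the upper envelope at every common $x$-value. I would use this to show that a horizontal line at height $y(u)=y(v)$ crosses $f$ in a segment whose left endpoint is on the lower-left part and whose right endpoint is on the upper-right part (or a reflection thereof), and then locate the bottommost vertex of $f$: it is the common left endpoint of both envelopes, so its $x$-coordinate is at most $x(u)$ — which seems to contradict the claim. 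The resolution, and the crux of the argument, is that \cref{prp:face} pins the bottommost vertex as the \emph{leftmost} vertex only for the face as drawn with the canonical-order orientation; a careful reading shows that when $u$ is on the lower envelope and $v$ strictly above it, in fact $u$ together with the black edge of item~(\ref{f:2}) forces $v$ to be reachable, and the interval condition $(x(u),x(v)]$ captures exactly the position of the bottom relative to $v$. I expect the main obstacle to be bookkeeping: correctly translating ``leftmost-that-is-bottommost-when-not-unique'' into a statement about half-open intervals, and checking the endpoint conventions (why $x(v)$ is included but $x(u)$ is excluded) in each of the three cases; the geometric content is entirely contained in \cref{prp:slopes} and \cref{prp:face}, so no new drawing-algorithm reasoning should be needed.
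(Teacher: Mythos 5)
Your case decomposition into lower/upper/mixed contains two genuine errors. First, in the case where $u$ and $v$ both lie on the lower envelope, your claim that ``any two boundary vertices of the lower envelope with the same $y$-coordinate must both be endpoints of that unique black edge'' is false: the $y$-coordinate along the lower envelope strictly decreases, is then constant, and then strictly increases, and such a function is not injective off its plateau. So $u$ may sit on the descending green path of part~(\ref{f:1}) and $v$ on the ascending blue path of part~(\ref{f:3}) at the same height with no black path joining them; in that situation the correct conclusion is the \emph{second} disjunct of the lemma (the bottommost vertices, namely the endpoints of the black edge of part~(\ref{f:2}) or the single junction of parts~(\ref{f:1}) and~(\ref{f:3}), lie in $(x(u),x(v)]$), which your argument never reaches. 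Second, your mixed case rests on a misreading of \cref{prp:face}: the traversal starts at the \emph{leftmost} vertex of $f$ (ties among equally-left vertices broken by taking the bottommost), not at the bottommost vertex of $f$. The bottommost vertices sit in the interior of the lower envelope, not at its left end, so the ``seeming contradiction'' you describe does not arise; as written, though, that case contains no actual argument, only the observation that one would be needed.

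The paper avoids all of this with a different decomposition: it splits the boundary into the \emph{left} envelope (parts~(\ref{f:6}) and~(\ref{f:1})) and the \emph{right} envelope (parts~(\ref{f:3}) and~(\ref{f:4})), observes that the vertices within each of these have pairwise distinct $y$-coordinates (the green and blue paths are strictly monotone and the single edge of part~(\ref{f:4}) or~(\ref{f:6}) is not horizontal), and concludes that if $u$ and $v$ are not joined by a black path then, since $x(u)<x(v)$, vertex $u$ lies on the left envelope and $v$ on the right one; the counter-clockwise walk from $u$ to $v$ then passes through the bottommost vertices, placing them in $(x(u),x(v)]$. If you want to salvage your lower/upper split, you must merge your first and third cases and argue directly about where the bottom of the convex polygon sits. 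Also note that the induction on $k$ you announce is never used and is unnecessary: everything follows from \cref{prp:slopes,prp:face} applied to the fixed drawing $\Gamma_k$.
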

\begin{proof}
Indeed, if $u$ and $v$ are connected by a path consisting exclusively of black edges, then by \Cref{prp:slopes} $u$ and $v$ have the same $y$-coordinate. Assume now that $u$ and $v$ are not connected by a path of black edges, and let without loss of generality $u$ be to the left of $v$ in $\Gamma_k$. Since the vertices of the left/right envelope all have distinct coordinates, $u$ belongs to the left envelope of $f$, while $v$ to the right envelope of $f$. As a result, a counter-clockwise traversal of $f$ from $u$ to $v$ contains the bottommost vertex/vertices of $f$, which proves the statement. 
\end{proof}

\begin{lemma}\label{lem:martin-property}
Let $\mathcal{S}_k$ be the vertices of $G_{k-1}$ that were shifted during the introduction of $P_k$ in $\Gamma_k$, and let $c$ be a positive integer. Let $\Gamma_k'$ be the drawing obtained from $\Gamma_{k-1}$ by first shifting the vertices of $\mathcal{S}_k$ by $c$ units to the right and then attaching $P_k$ as in the algorithm by Chrobak and Kant~\cite{DBLP:journals/ijcga/ChrobakK97}. Then, $\Gamma_k'$ is a convex planar grid drawing of $G_k$.
\end{lemma}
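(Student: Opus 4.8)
The plan is to view $\Gamma_k'$ as an instance of the Chrobak--Kant construction~\cite{DBLP:journals/ijcga/ChrobakK97} that deviates only at step~$k$: rather than shifting $\mathcal{S}_k$ rightward by exactly $q:=|P_k|$ before attaching $P_k$, we shift it rightward by $q+c$. Equivalently, $\Gamma_k'$ is obtained from the Chrobak--Kant drawing $\Gamma_k$ by translating every vertex of $\mathcal{S}_k$ further to the right by $c$; note that $P_k$ then moves accordingly, namely straight up by $c$, since $x(z_1)\in\{x(w_\ell),x(w_\ell)+1\}$ and $w_\ell\in\mathcal{T}_k:=V(G_{k-1})\setminus\mathcal{S}_k$ does not move, whereas $z_q$ sits on the line of slope $-1$ through the (now $c$-further-right) copy of $w_r$. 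The argument then splits into two parts: (i)~the pre-shifted copy of $\Gamma_{k-1}$ is still a convex planar grid drawing of $G_{k-1}$ that retains all structural properties on which the Chrobak--Kant attachment step relies; and (ii)~that attachment step, applied to such a drawing, produces a convex planar grid drawing of $G_k$.

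For part~(ii) I would argue that the attachment step is insensitive to the extra shift. All combinatorial data it uses---the neighbours $w_\ell,w_r$, the critical vertices $w_{\ell'},w_{r'}$, the updated shift-sets of~\eqref{eq:shift-sets-1}--\eqref{eq:shift-sets-2}, the saturation status of $w_\ell$, hence the rule fixing $x(z_1)$---depend only on $G_k$, $\pi$, and the colouring, not on coordinates. Since $z_q$ is placed on the line of slope $-1$ through $w_r$, the green edge $(z_q,w_r)$ keeps slope $-1$; and as $w_r$ only moved right, $z_q$ only moved up, so it is still a grid point strictly above $w_\ell$ and $w_r$, exactly as in~\cite{DBLP:journals/ijcga/ChrobakK97}. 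The edges new in $G_k$---the blue edge $(w_\ell,z_1)$, whose slope only increases and thus stays in $[1,+\infty]$; the horizontal black edges internal to $P_k$; the green edge $(z_q,w_r)$; and, if $P_k$ is a singleton, the red edges from $z_1$ to $w_{\ell+1},\dots,w_{r-1}$, whose slopes only become less steep yet stay in $[-\infty,-1)$---all lie in the slope ranges of~\Cref{prp:slopes}. Hence the contour condition holds for $C_k$, and, combined with \Cref{prp:face}, the faces newly closed off by $P_k$ are convex, by the same reasoning as in the original analysis.

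The crux is part~(i): shifting $\mathcal{S}_k$ rightward by a positive integer $c$ preserves planarity, integrality, and convexity of $\Gamma_{k-1}$. This is the familiar monotonicity of the shift operation of de Fraysseix, Pach and Pollack~\cite{DBLP:conf/stoc/FraysseixPP88}---indeed it is what makes the Chrobak--Kant shift by $q$ legitimate, $q$ being merely the smallest admissible amount. Concretely, since the shift-sets $S(w_1),\dots,S(w_p)$ are pairwise horizontally separated (a standard invariant of the construction), some vertical line $\lambda$ has $\mathcal{T}_k$ strictly to its left and $\mathcal{S}_k$ strictly to its right; an edge with both endpoints on one side of $\lambda$ is translated rigidly or not at all, whereas an edge crossing $\lambda$ keeps its left endpoint and has its right endpoint moved right by $c$, so by~\Cref{prp:slopes} it is black and stays horizontal, blue with slope decreasing within $(0,+\infty]$, or green with slope increasing within $[-1,0)$ (the unique contour edge straddling $\lambda$ is the only one whose slope might leave the contour-condition range, but it is removed from the contour when $P_k$ is attached). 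Integrality is immediate since $c\in\mathbb{Z}$; no two edges can newly cross, as the subdrawing right of $\lambda$ is translated rigidly; and for each face $f$ meeting $\lambda$---which, being convex, crosses $\lambda$ in exactly one edge of its lower envelope and one of its upper envelope---the slope changes above keep the slopes along the lower envelope increasing and along the upper envelope decreasing, using the shape prescribed by~\Cref{prp:face}, so $f$ stays convex. The step I expect to require the most care is exactly this last check: for each position at which $\lambda$ can cross a face, verifying that the stretched edge does not introduce a reflex angle---in particular when the stretched edge is a red edge of a singleton's fan, which becomes less steep but still cannot turn the boundary the wrong way---together with the companion observation that, when $P_k$ is a chain or a degree-$2$ singleton, the portion of $C_{k-1}$ between $w_\ell$ and $w_r$ is already a convex arc and remains so after stretching.
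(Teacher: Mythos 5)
Your reduction of the lemma to two claims---(i)~pre-shifting $\mathcal{S}_k$ by $c$ preserves planarity, convexity and the structural invariants of $\Gamma_{k-1}$, and (ii)~the attachment step is insensitive to the extra shift---is a reasonable high-level decomposition, and your treatment of~(ii) is fine. The problem is your justification of~(i), which is the actual content of the lemma. You dispose of it by asserting, ``as a standard invariant of the construction,'' that some vertical line $\lambda$ has $\mathcal{T}_k$ strictly to its left and $\mathcal{S}_k$ strictly to its right. This is not an invariant of the Chrobak--Kant construction and it is false in general: a blue contour edge may be vertical, so that $w_{r-1}\in\mathcal{T}_k$ and $w_r\in\mathcal{S}_k$ already share a column; worse, the shift-set updates of \cref{eq:shift-sets-1,eq:shift-sets-2} assign the covered vertices lying strictly between $w_\ell$ and $w_r$ to the shift-sets of $w_\ell$, $z_1$ and $w_r$, and these can interleave in $x$-coordinate with the owner of a neighbouring shift-set (for instance, when $w_\ell$ is saturated, $z_1$ is placed at $x(w_\ell)$ while the covered vertices $w_{\ell+1},\dots,w_{\ell'-1}$, absorbed into $S(w_\ell)$, can lie strictly to the right of $z_1$; if a later path has $z_1$ as its rightmost neighbour, a vertex of $\mathcal{T}$ then sits strictly to the right of a vertex of $\mathcal{S}$). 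Once $\lambda$ is gone, both your planarity argument (``no two edges can newly cross, as the subdrawing right of $\lambda$ is translated rigidly'') and your face-by-face convexity check collapse: the stretched edges are no longer exactly the edges crossing a line, and an edge can be stretched with its geometrically left endpoint moving while its right endpoint stays fixed. Establishing that translating $\bigcup_{i\ge r}S(w_i)$ rightward by an arbitrary positive amount preserves planarity and convexity is precisely the hard shift lemma of de Fraysseix--Pach--Pollack and Chrobak--Kant, which is proved there by induction over the construction, not by geometric separation; it cannot be shortcut this way.

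The paper sidesteps all of this with a simulation argument: letting $w_{\bar r}$ and $w_r$ be the two rightmost neighbours of $P_k$ on $C_{k-1}$, the extra shift of $\mathcal{S}_k=\bigcup_{i=r}^{p}S(w_i)$ by $c$ units is realised by attaching to $\Gamma_{k-1}$ a degree-$2$ dummy singleton $s_0$ with neighbours $w_{\bar r}$ and $w_r$, followed by degree-$3$ dummy singletons $s_1,\dots,s_{c-1}$ with neighbours $w_{\bar r}$, $s_{i-1}$, $w_r$; each such attachment shifts exactly $\mathcal{S}_k$ by one unit, so the correctness of the unmodified Chrobak--Kant algorithm, invoked as a black box, yields a convex planar grid drawing satisfying the contour condition, after which the dummies are deleted and $P_k$ is attached normally. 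To repair your direct route you would have to replace the vertical-line claim by an appeal to (or a re-proof of) the general shift lemma; as written, the crux of the statement is asserted rather than proved.
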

\begin{proof}
We focus on the case in which $P_k$ is a singleton; the case of a chain is similar. Let $w_{\bar{r}}$ and $w_r$ with $\ell \leq \bar{r} < r$ be the rightmost two neighbors of $P_k$ on $C_{k-1}$ in $\Gamma_{k-1}$. Note that, by construction (see \cref{eq:shift-right}), $\mathcal{S}_k$ is the union of the shift-sets of $w_r$ and all the vertices that follow it in the contour $C_{k-1}$. Then, shifting the vertices of $\mathcal{S}_k$ by $c$ units to the right can be simulated by the following procedure: Attach a degree-$2$ singleton $s_0$ with endpoints $w_r$ and $w_{\bar{r}}$, and for each $i=1,\ldots,c-1$ attach a degree-$3$ singleton $s_i$ with neighbors $w_r$, $s_{i-1}$ and $w_{\bar{r}}$. By the correctness of the algorithm by Chrobak and Kant~\cite{DBLP:journals/ijcga/ChrobakK97}, the resulting drawing is a convex planar grid drawing which satisfies the contour condition. Then, removing vertices $s_0,\ldots,s_{c-1}$ and introducing $P_k$ as in the algorithm does not violate any of the aforementioned properties, yielding a convex planar grid drawing $\Gamma_k'$ of $G_k$.
\end{proof}

\begin{lemma}\label{lem:black-edge}
Let $f$ be a face of $\Gamma_{k-1}$ that contains a black edge $(u,v)$ at its lower envelope, such that $u$ is to the left of $v$ in $\Gamma_{k-1}$, and assume that some vertices of $f$ are not shifted during the introduction of $P_k$ in $\Gamma_k$. Then, neither $u$ nor $v$ are shifted, unless $(u,v)$ is the rightmost edge of the lower envelope of $f$, in which case $u$ is not shifted, while $v$ is shifted. 
\end{lemma}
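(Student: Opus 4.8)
The plan is to analyze the shifting procedure of Chrobak and Kant restricted to the face $f$ and its black edge $(u,v)$ on the lower envelope. Recall from \cref{eq:shift-right} that the set of shifted vertices is $\bigcup_{i=r}^{p} S(w_i)$, i.e., a ``suffix'' of the contour $C_{k-1}$ together with all vertices hidden inside the shift-sets of those contour vertices. So the first observation is that the shifted set, when restricted to vertices of $C_{k-1}$, forms a contiguous suffix $w_r, w_{r+1}, \dots, w_p$ of the contour; equivalently, there is a ``cut point'' on $C_{k-1}$ such that everything to its left stays and everything from $w_r$ rightward moves. I would then combine this with the structure of $f$ given by \cref{prp:face}: the lower envelope of $f$ is $v_1$-to-$v_2$ monotone in $x$ (descending green, then black, then ascending blue), and the black part is horizontal.

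First I would argue that whether a vertex of $f$ is shifted depends only on where it sits relative to the cut between the stay-part and the move-part of the contour. Since by hypothesis some vertices of $f$ are not shifted, $f$ is not entirely inside the moved region, so the cut ``passes through'' $f$. The key structural fact is that the lower envelope vertices of $f$ that lie on (or got absorbed into shift-sets of) the stayed part of $C_{k-1}$ are exactly those to the left of the cut, and they keep their positions, while those to the right move rigidly by $q$ (resp. $c$) units. Next I would use \cref{prp:slopes}: black edges are rigid, so if both $u$ and $v$ moved or both stayed, the edge $(u,v)$ is fine; the only subtle case is when the cut separates $u$ from $v$, i.e., $u$ stays and $v$ moves (it cannot be that $u$ moves and $v$ stays, because the moved set is a right-suffix and $u$ is to the left of $v$). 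I then need to show this separating situation happens exactly when $(u,v)$ is the rightmost black edge of the lower envelope of $f$.

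The heart of the argument is the following: the cut on the contour that delimits the shifted vertices is determined by the position of $w_r$, the rightmost neighbor of $P_k$ on $C_{k-1}$, together with the shift-set reassignments in \cref{eq:shift-sets-2}. I would argue that a black edge $(u,v)$ on the lower envelope of $f$ has its left endpoint $u$ on a stayed contour position precisely when $u$ is at or left of $w_r$'s predecessor region, and then show that if $(u,v)$ is not the rightmost black edge of the lower envelope of $f$, then both $u$ and $v$ lie strictly to the left of the cut (both stay) — intuitively because there is a further black or blue edge of $f$ to the right of $(u,v)$, which must itself be ``attached'' when $P_k$ (or an ancestor path) is introduced, forcing the whole black segment up to and including $v$ to be on the same side of the cut. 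Conversely, if $(u,v)$ \emph{is} the rightmost black edge of the lower envelope, then $v$ is the leftmost vertex of the ascending blue part, and this vertex is exactly the one where $w_r$-side shifting begins, so $v$ moves while $u$ (one horizontal step to the left, hence in the stayed suffix-complement) does not. I would make ``$u$ is one grid-unit to the left of $v$ and $v$ is the first moved vertex'' precise by invoking that black edges are horizontal and rigid, so $x(v) = x(u)+1$, and $u$ cannot coincide with a moved contour vertex.

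The main obstacle I expect is the bookkeeping around shift-sets: a vertex of $f$ need not itself be on $C_{k-1}$; it may have been absorbed into the shift-set $S(w_i)$ of some contour vertex $w_i$ during an earlier step, and whether it moves is governed by whether that $w_i$ moves. So I need the invariant that every vertex of $f$ belongs to $S(w_i)$ for a \emph{unique} contour vertex $w_i$, and moreover that the lower-envelope vertices of $f$ respect the left-to-right order of their hosting contour vertices — i.e., the assignment of $f$'s lower-envelope vertices to contour vertices is monotone. Establishing this monotonicity, and then checking that the ``rightmost black edge'' is precisely the edge straddling the boundary between the host contour vertices that stay and those that move, is the delicate part; everything else (rigidity of black edges, the suffix structure of the shifted set, convexity/planarity of the result) follows directly from \cref{prp:slopes}, \cref{prp:face}, \cref{lem:martin-property}, and \cref{lem:same-y-coordinate}. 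I would also need to separately but briefly treat the degenerate sub-cases where $w_{\ell'}=w_{r'}$ or where the lower envelope of $f$ has an empty black part, in which case the lemma's exceptional clause is vacuous.
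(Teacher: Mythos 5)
Your proposal gets the general setup right (the shifted vertices form a suffix of contour shift-sets, black edges are rigid, and the only conceivable split is ``$u$ stays, $v$ moves''), but your characterization of \emph{when} that split occurs is inverted, and that characterization is precisely the content of the lemma. You claim that if $(u,v)$ is the rightmost black edge of the lower envelope then ``$v$ is the leftmost vertex of the ascending blue part, and this vertex is exactly the one where $w_r$-side shifting begins, so $v$ moves.'' But when a non-empty blue part follows the black edge, $(u,v)$ is \emph{not} the rightmost edge of the lower envelope, and the lemma asserts that in this case \emph{neither} endpoint is shifted; your plan would therefore establish a false statement. The mechanism you are missing is the definition of the right-critical vertex $w_{r'}$ at the step $k'<k$ when $f$ is formed: $w_{r'}$ is the first vertex, walking left from $w_{r-1}$, that has a neighbor in $P_{k'}$ or is preceded by a green or black edge. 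Hence if the lower envelope has the form green$^{*}\,$black$\,$blue$^{*}$ with a non-empty blue part, then $w_{r'}=v$ (it is preceded by the black edge $(u,v)$), and by \cref{eq:shift-sets-2} the set $S(w_r)=\bigcup_{i=r'+1}^{r}S(w_i)$ absorbs only the blue-part vertices strictly to the right of $v$, excluding both $u$ and $v$. Since, when not all of $f$ is shifted, the lower-envelope vertices that move are exactly those in the shift-set of the rightmost vertex of the lower envelope, neither $u$ nor $v$ moves. Only when the black edge is the last edge of the lower envelope does $v=w_r$ lie in $S(w_r)$ while $u=w_{r'}$ does not, which is the exceptional case of the lemma.

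This is exactly how the paper argues: a short appeal to \cref{eq:shift-sets-2} together with where the critical vertices land relative to the black edge. As written, your ``conversely'' direction would also undermine the later application in \cref{lem:properties}, which relies on the black edge's endpoints staying put in the non-rightmost case. Two smaller remarks: by \cref{prp:face} the lower envelope contains at most one black edge, so ``rightmost black edge'' is not a meaningful refinement of ``the black edge''; and the monotonicity/uniqueness bookkeeping you flag as the delicate part is legitimate but is implicit in the fact that shift-sets are only ever merged into contiguous, order-respecting groups along the contour --- it is not where the actual content of the lemma lies.
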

\begin{proof}
By \cref{eq:shift-sets-2}, it follows that if $v$ is not the rightmost edge of the lower envelope of $f$, then neither vertex $u$ nor vertex $v$ is in the shift-set of the rightmost vertex of $f$, when $f$ is formed in the incremental construction of $\Gamma_{k-1}$ based on $\pi$. On the other hand, if $v$ is the rightmost edge of the lower envelope of face $f$, then $v$ is in the shift-set $S(v)$, while vertex $u$ is not. Since not all vertices of $f$ are shifted during the introduction of $P_k$ in $\Gamma_k$, it follows that, among the vertices of the lower envelope of $f$, the ones that are shifted are those in the shift-set of the rightmost vertex of the lower envelope of $f$, which proves the lemma. 
\end{proof}

\subsection{Our modification.}
\label{ssec:modification}
We start by placing $v_1$ and $v_2$ of path $P_0$ as in the algorithm by Chrobak and Kant~\cite{DBLP:journals/ijcga/ChrobakK97}, that is, at points $(0,0)$ and $(1,0)$, respectively. Assume now that $\Gamma_{k-1}$ is a convex planar \drawing of $G_{k-1}$. For placing path $P_k$ in drawing $\Gamma_{k-1}$,  $k=1,\ldots,m$, we distinguish two cases. In the first case, $P_k$ is a chain and we proceed as in the algorithm by Chrobak and Kant~\cite{DBLP:journals/ijcga/ChrobakK97}. Hence, we focus on the more elaborated case, in which $P_k$ is a singleton, i.e., $P_k=\{z_1\}$. In this case, our algorithm first shifts the vertices of $\Gamma_{k-1}$ appropriately to guarantee that the obtained drawing $\Gamma_k$ is a \drawing (to be shown in~\cref{lem:properties}). Let $w_{x_0},\ldots,w_{x_{\rho+1}}$  be the neighbors of $P_k$ along $C_{k-1}$, such that $\ell = x_0 < x_1 < \ldots < x_{\rho} < x_{\rho+1} = r$. Note that, based on this notation, $\rho$ denotes the number of neighbors of $P_k$ between $w_\ell$ and $w_r$ on $C_{k-1}$. Besides critical vertices $w_{\ell'}$ and $w_{r'}$, our modification introduces the following $\rho+1$ \emph{pivot vertices} $w_{x_1'},\ldots,w_{x_{\rho+1}'}$, where $w_{x_{1}'}=w_{\ell'}$ and $w_{x_{\rho+1}'}=w_{r}$. For $j=2,\ldots,\rho$, the pivot vertex $w_{x_j'}$ (with $x_{j-1} <  x_j' \leq x_j$) is defined as the first vertex encountered in the traversal of $C_{k-1}$ starting from  $w_{x_{j-1}+1}$ towards $w_{x_j}$ that either is neighboring $z_1$ or is followed by an edge of $C_{k-1}$ that is blue or black. In other words, pivot vertex $w_{x_j'}$ would be the vertex that the algorithm by Chrobak and Kant~\cite{DBLP:journals/ijcga/ChrobakK97} identifies as left-critical, when attaching a singleton with exactly two neighbors $w_{x_{j-1}}$ and $w_{x_j}$ on $C_{k-1}$. The algorithm modifies $\Gamma_{k-1}$ by performing $\rho+1$ consecutive refinements of the vertex positions. In the $j$-th refinement, $j=1,\ldots,\rho+1$, the algorithm shifts each vertex in $\bigcup_{i=x_j'}^{p} S(w_i)$ by one unit to the right; see \cref{fig:modification}. This implies that vertices $w_r,\ldots,w_p$ of $C_{k-1}$ have been shifted in total by $\rho+1$ units to the right. Note that in the algorithm by Chrobak and Kant~\cite{DBLP:journals/ijcga/ChrobakK97} these vertices would be shifted by only one unit. The next observations follow from our shifting~strategy.

\begin{observation}\label{obs:chain-shift}
If $P_k$ is a chain, then our shifting strategy and the one by Chrobak and Kant~\cite{DBLP:journals/ijcga/ChrobakK97} are identical.  
\end{observation}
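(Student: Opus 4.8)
The plan is to read the claim off directly from the specification of the modified algorithm. First I would recall the case distinction that is made when placing $P_k$ in $\Gamma_{k-1}$: if $P_k$ is a chain, the algorithm runs verbatim the procedure of Chrobak and Kant~\cite{DBLP:journals/ijcga/ChrobakK97}, and it only departs from that procedure---by introducing the $\rho+1$ pivot vertices and replacing the single shift with the cascade of $\rho+1$ unit refinements---when $P_k$ is a singleton. Hence, when $P_k$ is a chain, the set of shifted vertices ($\bigcup_{i=r}^{p} S(w_i)$), the shift amount ($q=|P_k|$ units to the right), and the subsequent shift-set updates given by \eqref{eq:shift-sets-1} and \eqref{eq:shift-sets-2} all coincide, letter for letter, with those of Chrobak and Kant; this is the entire content of the observation.

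For context I would add one sentence explaining why the pivot-vertex construction is vacuous for a chain in any case: by Property~P.\ref{prp:3} of the canonical order every vertex of a chain $P_k$ has degree $2$ in $G_k$, so $P_k$ has exactly the two neighbors $w_\ell$ and $w_r$ on $C_{k-1}$; in the notation $\ell = x_0 < x_1 < \ldots < x_{\rho+1} = r$ this means $\rho = 0$, so no intermediate pivot vertices ever arise. I would, however, be careful \emph{not} to claim that the refinement cascade ``reduces'' to the Chrobak--Kant chain shift in this degenerate case, because the cascade would perform a shift by $\rho+1 = 1$ unit whereas the chain shift of Chrobak and Kant is by $q$ units; the correct statement is simply that the cascade is never invoked for a chain.

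Since the observation is immediate from the definition of the algorithm, there is no genuine obstacle. The only thing to be careful about is not to conflate ``the refinement cascade, if it were applied to a chain, would have the same effect'' (which is false) with ``the algorithm does not apply the refinement cascade to chains'' (which is what is actually being used here), and to note that the identity of the shifting strategies also entails the identity of the placement of the vertices of $P_k$ themselves, since those coordinates are a deterministic function of the (common) positions on $C_{k-1}$ after shifting.
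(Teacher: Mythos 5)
Your proposal is correct and matches the paper, which offers no separate proof precisely because the modification is defined by a case distinction that runs the Chrobak--Kant procedure verbatim whenever $P_k$ is a chain; the observation is immediate from that definition. Your additional caution---that the pivot-vertex cascade is simply never invoked for chains, rather than degenerating to the Chrobak--Kant chain shift (it would shift by $\rho+1=1$ unit instead of $q$)---is a correct and worthwhile clarification, but it does not change the substance of the argument.
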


\begin{observation}\label{obs:singleton-shift}
If $P_k$ is a singleton, then the horizontal distance between any two consecutive neighbors of $P_k$ in $C_{k-1}$ gets increased by one unit in $\Gamma_k$, while in the algorithm by Chrobak and Kant~\cite{DBLP:journals/ijcga/ChrobakK97} this would only be the case for $w_{x_\rho}$ and $w_{x_\rho+1}=w_r$. 
\end{observation}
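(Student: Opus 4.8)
\medskip\noindent The plan is to determine, for each neighbor $w_{x_t}$ of $P_k$ on $C_{k-1}$ (with $t\in\{0,1,\dots,\rho+1\}$), exactly how many of the $\rho+1$ refinements move it, and then subtract consecutive $x$-coordinates. I will use two standard facts about shift-sets, both following by a straightforward induction from the update rules~\eqref{eq:shift-sets-1}--\eqref{eq:shift-sets-2}: (i)~the shift-sets $S(w_1),\dots,S(w_p)$ of the contour vertices partition $V(G_{k-1})$, and (ii)~every contour vertex $w_i$ of $C_{k-1}$ belongs to its own shift-set $S(w_i)$. I will also use that the $\rho+1$ refinements all operate with the shift-sets of $\Gamma_{k-1}$ (these are updated only afterwards, when $P_k$ is attached) and that the subsequent placement of the singleton $z_1$ moves no already-placed vertex.

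The key step is the following count. By (i) and (ii), the vertex $w_{x_t}$ is moved by the $j$-th refinement, i.e.\ it lies in $\bigcup_{i=x_j'}^{p}S(w_i)$, if and only if $x_t\ge x_j'$. Now the pivot inequalities $x_{j-1}<x_j'\le x_j$ hold for every $j=1,\dots,\rho+1$: for $2\le j\le\rho$ this is the definition, for $j=1$ it reads $x_0=\ell<\ell'\le x_1$ (since $w_{x_1}$ is a neighbor of $P_k$), and for $j=\rho+1$ it reads $x_\rho<r=x_{\rho+1}$ because $w_{x_{\rho+1}'}=w_r=w_{x_{\rho+1}}$. Combined with the monotonicity $\ell=x_0<x_1<\dots<x_{\rho+1}=r$, the condition $x_t\ge x_j'$ is equivalent to $j\le t$. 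Hence $w_{x_t}$ is moved precisely by the refinements $j=1,\dots,t$, i.e.\ it is shifted $t$ units to the right in total, so
\[
x_{\Gamma_k}(w_{x_t})=x_{\Gamma_{k-1}}(w_{x_t})+t,\qquad t=0,1,\dots,\rho+1.
\]
In particular $w_\ell$ does not move while $w_r$ moves by $\rho+1$ units, consistently with the discussion preceding the observation.

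The first assertion is now immediate: for consecutive neighbors $w_{x_t}$ and $w_{x_{t+1}}$,
\[
x_{\Gamma_k}(w_{x_{t+1}})-x_{\Gamma_k}(w_{x_t})=\bigl(x_{\Gamma_{k-1}}(w_{x_{t+1}})-x_{\Gamma_{k-1}}(w_{x_t})\bigr)+1,
\]
so every such horizontal distance grows by exactly one unit. For the comparison with the algorithm of Chrobak and Kant~\cite{DBLP:journals/ijcga/ChrobakK97}, recall that there a singleton triggers a single shift of $\bigcup_{i=r}^{p}S(w_i)$ by $q=1$ unit; by (i)--(ii) the only one of $w_{x_0},\dots,w_{x_{\rho+1}}$ inside that set is $w_r=w_{x_{\rho+1}}$, so only $w_r$ moves, by one unit, and therefore only the distance between $w_{x_\rho}$ and $w_{x_{\rho+1}}=w_r$ increases (again by one), while all other consecutive-neighbor distances are unchanged.

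Everything here is routine; the only part requiring a little care is the index bookkeeping, namely the equivalence ``$x_t\ge x_j'\iff j\le t$'' derived from $x_{j-1}<x_j'\le x_j$ and the monotonicity of the $x_t$, together with the boundary cases $t=0$ and $j=\rho+1$. I do not expect any genuine obstacle.
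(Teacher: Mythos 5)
Your proof is correct and is essentially the detailed version of what the paper leaves implicit: the paper offers no explicit proof, stating only that the observation ``follows from our shifting strategy,'' and your counting argument (each neighbor $w_{x_t}$ lies in $\bigcup_{i=x_j'}^{p}S(w_i)$ exactly for $j\le t$, via the pivot inequalities $x_{j-1}<x_j'\le x_j$ and the partition/self-membership properties of the shift-sets) is precisely the verification that claim requires. No gaps; the only point worth a passing remark is that the disjointness underlying your fact~(i) relies on $\ell'\le r'+1$, which the paper's discussion of the critical vertices guarantees.
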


\begin{figure}[t!]
	\centering
    \begin{subfigure}[b]{.4\textwidth}
    \centering
	\includegraphics[page=9,scale=0.9]{figs/canonical}
	\caption{}
	\label{fig:modification}
	\end{subfigure}\hfil
    \begin{subfigure}[b]{.4\textwidth}
    \centering
	\includegraphics[page=1]{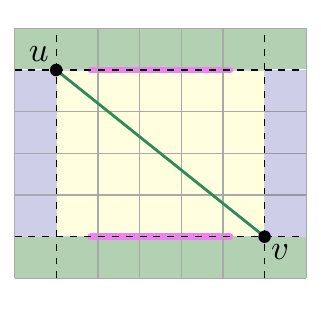}
	\caption{}
	\label{fig:uv}
	\end{subfigure}
	\caption{%
	(a)~Introducing a singleton $P_k$ in $\Gamma_{k-1}$ according to our modification of the algorithm by Chrobak and Kant~\cite{DBLP:journals/ijcga/ChrobakK97}; the white-filled vertices are the identified pivot vertices, and (b)~edge-vertex configurations used in the proof of \Cref{lem:properties}.}	
	
\end{figure}

The construction of $\Gamma_k$ is completed by placing the vertices of $P_k$ as in the algorithm by Chrobak and Kant~\cite{DBLP:journals/ijcga/ChrobakK97}, i.e., we set either $x(z_1) = x(w_\ell)$ or $x(z_1) = x(w_\ell)+1$ (depending on whether $z_1$ is saturated or not, respectively), we place $z_q$ at the intersection of the line of slope~$-1$ through $w_r$ with the vertical line through point $x(z_1)+q-1$ and for $i=1,\ldots,q-1$, vertex $v_i$ of $P_k$ is placed $q-i$ units to the left of $z_q$. Thus, the contour condition is maintained in $\Gamma_k$. 

\begin{lemma}\label{lem:planarity-convexity}
The drawing $\Gamma_k$ produced by our modification of the algorithm by Chrobak and Kant~\cite{DBLP:journals/ijcga/ChrobakK97} is planar and convex. 
\end{lemma}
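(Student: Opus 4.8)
The plan is to proceed by induction on $k$, reducing the claim to the correctness of the algorithm by Chrobak and Kant~\cite{DBLP:journals/ijcga/ChrobakK97} together with \cref{prp:slopes,lem:martin-property}. If $P_k$ is a chain, then by \cref{obs:chain-shift} our construction coincides with theirs, and $\Gamma_k$ is planar and convex by the correctness of~\cite{DBLP:journals/ijcga/ChrobakK97}. So assume $P_k=\{z_1\}$ is a singleton and that, inductively, $\Gamma_{k-1}$ is a planar convex grid drawing of $G_{k-1}$ satisfying the contour condition. The starting point is the observation that, since $w_{x_{\rho+1}'}=w_r$, the $(\rho+1)$-st refinement shifts the vertices in $\bigcup_{i=r}^{p}S(w_i)$ by one unit to the right, which is exactly the shift that the algorithm of~\cite{DBLP:journals/ijcga/ChrobakK97} performs as the first half of attaching the singleton $z_1$. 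Hence $\Gamma_k$ is obtained from $\Gamma_{k-1}$ by first performing the $\rho$ refinements indexed $j=1,\dots,\rho$, and then attaching $z_1$ exactly as in~\cite{DBLP:journals/ijcga/ChrobakK97}.

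For planarity, observe that the $j$-th refinement shifts a tail $\bigcup_{i=x_j'}^{p}S(w_i)$ of the shift-sets along $C_{k-1}$ to the right by one unit; a composition of such operations preserves planarity and keeps the coordinates integral, by the standard shifting argument for which the shift-sets are designed (this is also the mechanism behind the proof of \cref{lem:martin-property}). Attaching $z_1$ afterwards places it above the contour with all of its neighbors $w_{x_0},\dots,w_{x_{\rho+1}}$ on it and, as in~\cite{DBLP:journals/ijcga/ChrobakK97}, introduces no crossing; thus $\Gamma_k$ is planar.

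For convexity, I would treat separately the faces of $\Gamma_k$ inherited from $\Gamma_{k-1}$ and those newly created by $z_1$. For an inherited face, the extra shifts affect slopes only in the way prescribed by \cref{prp:slopes} (they decrease slopes of blue edges, increase slopes of green edges, and leave black and red edges rigid), so, exactly as in the convexity analysis of~\cite{DBLP:journals/ijcga/ChrobakK97}, no inherited face can become non-convex; some care is needed because the refinements do temporarily dent the outer face of the intermediate drawing, but every dented edge lies strictly between $w_\ell$ and $w_r$ on $C_{k-1}$ and is therefore covered by $z_1$, so the outer face of $\Gamma_k$ is convex again. For a newly created internal face---bounded from below by the sub-path of $C_{k-1}$ between two consecutive neighbors $w_{x_{j-1}},w_{x_j}$ of $z_1$ and from above by the two edges joining $w_{x_{j-1}}$ and $w_{x_j}$ to $z_1$---I would use the definition of the pivot $w_{x_j'}$ to argue that, along this sub-path, every edge keeps its slope except the one immediately following $w_{x_j'}$, whose slope is merely flattened towards $0$; combined with $z_1$ lying strictly above the entire sub-path (by the contour condition), this yields convexity of the face.

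The step I expect to be the main obstacle is precisely the convexity of the newly created faces under the additional shifting: one must use the pivot definition to control which single edge of each such face is flattened, check that the slopes around the face boundary remain monotone in the cyclic sense, and verify that $z_1$ is placed above the ``peak'' of the bounding sub-path. Everything else is a direct appeal to the correctness of the attachment step of~\cite{DBLP:journals/ijcga/ChrobakK97} and to \cref{prp:slopes,lem:martin-property,obs:chain-shift}.
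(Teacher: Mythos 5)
Your overall plan is sound and reaches the same conclusion, but the route differs from the paper's in one instructive way. The paper does not analyze slopes face by face at all: its key move is to observe that the $j$-th refinement, pivoted at $w_{x_j'}$, is \emph{exactly} the shift that the original algorithm of Chrobak and Kant~\cite{DBLP:journals/ijcga/ChrobakK97} would perform if it were attaching a degree-$2$ singleton with neighbors $w_{x_{j-1}}$ and $w_{x_j}$ (with $w_{x_j'}$ as left-critical vertex and either the same vertex or its successor as right-critical vertex). Convexity of the subdrawing induced by $G_{k-1}$, together with the continued validity of \cref{prp:slopes,prp:face}, then follows refinement by refinement from the correctness of the original algorithm, and the convexity of the faces incident to $P_k$ follows as in~\cite{DBLP:journals/ijcga/ChrobakK97} because the contour condition is maintained. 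You instead argue convexity directly: for inherited faces via \cref{prp:slopes}, and for the new faces via a slope analysis around each pivot. This can be made to work, but note two points. First, your claim that the extra shifts ``affect slopes only in the way prescribed by \cref{prp:slopes}'' is precisely what needs justification: \cref{prp:slopes} is a property of the \emph{legal} shifts of the original algorithm, and the simulation argument above is what certifies that each refinement is such a legal shift (your appeal to \cref{lem:martin-property} does not cover this, since that lemma concerns shifting the set $\mathcal{S}_k$ that the original algorithm shifts, not a tail starting at an intermediate pivot). Second, in your analysis of a new face the edge whose slope changes is the one \emph{entering} the pivot, i.e.\ $(w_{x_j'-1},w_{x_j'})$, not the edge following $w_{x_j'}$; by the definition of the pivot this edge is green (or is the edge leaving $w_{x_{j-1}}$), and stretching it moves its slope towards $0$ as you say. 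With the simulation observation added, your argument collapses to the paper's; as written, the inherited-face step is the one place where the proposal leans on an unproved transfer of \cref{prp:slopes} to the new shifting scheme.
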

\begin{proof}
The fact that drawing $\Gamma_k$ is planar is implied by the original proof of Chrobak and Kant~\cite{DBLP:journals/ijcga/ChrobakK97}, since the contour condition is maintained for $\Gamma_k$. We next argue about the convexity of $\Gamma_k$. Since $\Gamma_{k-1}$ is convex, if $P_k$ is a chain, then  $\Gamma_k$ is also convex by \cref{obs:chain-shift}. Assume that $P_k$ is a singleton, i.e., $P_k = \{z_1\}$. In this case, we claim that the extra shifts that our modification performs (see \cref{obs:singleton-shift}) do not affect the convexity of $\Gamma_k$. To prove the claim, consider any two consecutive neighbors $w_{x_j}$ and $w_{x_{j+1}}$ of $z_1$ along $(w_1,\ldots,w_p)$ of $C_{k-1}$ with $0 \leq j \leq \rho$. If the algorithm by Chrobak and Kant were about to place a singleton connecting only $w_{x_j}$ and $w_{x_{j+1}}$ to derive $\Gamma_k$, then it would perform a shift using as left-critical vertex the one that our modification identifies as pivot $w_{x_j'}$, and as right-critical vertex either the same vertex or its right neighbor (since the singleton is of degree $2$). Thus, convexity would be maintained. Applying the same reasoning to any pair of consecutive neighbors of $P_k$, proves that the subdrawing of $\Gamma_k$ induced by $G_{k-1}$ is indeed convex. In addition, the same reasoning implies that \cref{prp:slopes,prp:face} of the algorithm by Chrobak and Kant~\cite{DBLP:journals/ijcga/ChrobakK97} also hold.  To complete the proof of our claim, we note that the fact that the faces incident to $P_k$ in $\Gamma_k$ are  convex follows using the same approach as in the algorithm by Chrobak and Kant, as the contour condition is maintained.  
\end{proof}

\noindent Note that since the contour condition is maintained and we do not modify the shift-sets, the fact that \cref{prp:slopes,prp:face} hold in our modification implies that \cref{lem:same-y-coordinate,lem:martin-property,lem:black-edge} also hold. To complete the proof of correctness of our algorithm, we  prove in the following lemma that $\Gamma_k$ is a \drawing of $G_k$. To ease the proof, we denote by $\Gamma_{k-1}'$ the drawing of $G_{k-1}$ obtained after the preparatory shifting in drawing $\Gamma_{k-1}$ for the introduction of $P_k$.

\begin{lemma}\label{lem:properties}
Let $\Gamma_{k-1}$ be a \drawing of $G_{k-1}$ computed by our algorithm. The following statements hold: 
\begin{inparaenum}[(i)]
\item\label{part1}the edge-vertex resolution of $\Gamma_{k-1}'$ is no less than that of $\Gamma_{k-1}$, and
\item\label{part2}introducing the new edges of $\Gamma_k$, which are either part of $P_k$ or incident to the endpoints of $P_k$, preserves the edge-vertex resolution to at least $\frac{1}{2}$.
\end{inparaenum}
\end{lemma}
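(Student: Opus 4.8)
The plan is to prove the two statements separately. Throughout we may assume that $P_k=\{z_1\}$ is a singleton: if $P_k$ is a chain then $\Gamma_{k-1}'=\Gamma_{k-1}$ by \cref{obs:chain-shift}, so part~(i) is trivial, and the placement of $P_k$ is exactly the one of Chrobak and Kant, so part~(ii) follows from the edge--vertex estimates carried out below for a singleton.

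For part~(i), I would decompose the preparatory shift into the $\rho+1$ unit right-shifts that define it and show that a single such shift does not decrease any edge--vertex distance; summing over the $\rho+1$ shifts then proves the claim. Fix one shift, of a set $S=\bigcup_{i\ge x_j'}S(w_i)$, an edge $e=(u,v)$ of $G_{k-1}$, and a vertex $c\notin\{u,v\}$. If $\{u,v,c\}\subseteq S$ or $\{u,v,c\}\cap S=\emptyset$, the three points undergo the same translation and $d(c,e)$ is unchanged. Otherwise I would use the structure of the shift-sets (inherited from Chrobak and Kant~\cite{DBLP:journals/ijcga/ChrobakK97} and already exploited in \cref{lem:black-edge}): $S$ is separated from $G_{k-1}\setminus S$ by a $y$-monotone cut consistent with the left-to-right order along $C_{k-1}$, the only contour edge it crosses is $(w_{x_j'-1},w_{x_j'})$, and each shift-set occupies an $x$-interval delimited by the verticals through the neighbours of its generating vertex. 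Then, if $e$ does not cross the cut, $e$ and $c$ lie on opposite sides and $c$ is horizontally beyond the entire segment $e$; writing the nearest point of the translated segment (or translated vertex) as $sP'$ with $P'=P+(1,0)$, $s\in[0,1]$, and comparing $|c-sP'|^2$ with $|c-sP|^2$ via the sign of $x(c)-x(O)$ ($O$ the fixed endpoint) shows $d(c,e)$ cannot decrease. The remaining case, $e=(w_{x_j'-1},w_{x_j'})$ the straddling edge, is the delicate one: here $e$ is green (by the definition of the pivot $w_{x_j'}$ and the contour condition) and the shift only flattens it (\cref{prp:slopes}) while preserving all vertical distances, and a direct estimate using the integrality of the coordinates and the horizontal confinement of $S(w_{x_j'})$ keeps every such distance at least~$\frac12$.

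For part~(ii), it suffices to check that every \emph{new} object is at distance at least~$\frac12$ from every non-incident old object; together with part~(i) this shows $\Gamma_k$ is a \drawing of $G_k$. First I would bound the distance from $z_1$ to an edge of $G_{k-1}$: since $\Gamma_{k-1}'$ lies weakly below $C_{k-1}'$, $z_1$ lies weakly above it, and the contour edges left of $w_\ell$ or right of $w_r$ turn out to be at distance $\ge1$ from $z_1$, it is enough to bound the distance from $z_1$ to a contour edge $(w_i,w_{i+1})$ with $\ell\le i<r$, i.e.\ one on the lower boundary of a new face. If it is black, $z_1$ is at least one grid-unit above its supporting line, so the distance is $\ge1$; if it is green, $x(z_1)+y(z_1)$ exceeds the integer value of $x+y$ along the edge by at least~$1$, so the distance is $\ge\frac1{\sqrt2}$; the only laborious subcase is a blue edge, where one uses that the preparatory refinement has pushed the top of that ascent one extra unit to the right. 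Dually, I would bound the distance from a new edge --- the blue edge $(w_\ell,z_1)$, the green edge $(z_1,w_r)$, or a red edge $(z_1,w_{x_i})$ --- to a non-incident old vertex: $(w_\ell,z_1)$ and $(z_1,w_r)$ lie in a vertical slab of width~$1$ and the only old vertices that can be near them lie on the incident faces, where the extra unit of separation between consecutive neighbours of $z_1$ (\cref{obs:singleton-shift}) provides the clearance; each red edge $(z_1,w_{x_i})$ bounds a ``sliver'' face whose apex $z_1$ is far to the upper-left of its base, so the only potentially close old vertices are those adjacent to $w_{x_i}$, handled again by the same extra unit. The main obstacle of the whole proof is exactly this last family of steep (blue and red) configurations, where the required clearance is tight at~$\frac12$ and must be wrung out of the single extra horizontal unit produced by our refinement together with the integrality of the grid and the slope bounds of \cref{prp:slopes,prp:face}; these are precisely the configurations enumerated in \cref{fig:uv}.
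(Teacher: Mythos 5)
Your reduction of part~(i) to $\rho+1$ individual unit shifts is a legitimate reorganization, but the per-shift argument contains two concrete errors that leave exactly the hard configurations uncovered. First, the family of edges with one shifted and one stationary endpoint is not the single contour edge $(w_{x_j'-1},w_{x_j'})$: by \cref{prp:slopes} every blue and every green edge of $G_{k-1}$ --- including the many that no longer lie on $C_{k-1}$ and now sit on upper envelopes of internal faces --- can be stretched by a shift, so your ``delicate case'' is in fact a large family of edges scattered through the drawing. Moreover, the edge $(w_{x_j'-1},w_{x_j'})$ need not be green: when $x_j'=x_{j-1}+1$ it is the contour edge leaving a neighbor of $z_1$, which the pivot definition does not constrain and which can be blue. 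Second, and more seriously, the claim that when $e$ does not cross the cut the vertex $c$ on the other side is ``horizontally beyond the entire segment $e$'' is false, and the configuration you thereby dismiss is precisely the one that carries the difficulty: take a green edge $(u,v)$ on the upper envelope of a face $f$ (the green-or-red boundary part of \cref{prp:face}) with both endpoints shifted, and a stationary vertex $w$ of the lower envelope of $f$ lying strictly inside the horizontal extent of $(u,v)$ at the height of $v$. The paper handles this by first confining the offending vertex to the boundary of the axis-aligned bounding box of $(u,v)$ (using grid integrality, plus the observation that a relative displacement large enough to push the vertex across the supporting line would contradict the planarity guaranteed by \cref{lem:martin-property}), and then invoking \cref{lem:same-y-coordinate}, \cref{lem:black-edge} and the envelope structure of \cref{prp:face} to show that the vertex cannot move toward the edge. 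Finally, the auxiliary structure you attribute to the shift sets (a $y$-monotone cut crossed by a single contour edge, each shift set confined to an $x$-interval between the verticals through its generator's neighbours) is established nowhere in the paper and would itself require proof.

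Part~(ii) of your sketch is essentially the paper's argument (green edges have slope exactly $-1$; blue and red edges occupy a width-one vertical slab that the extra unit shift has cleared of contour vertices), and your explicit treatment of old edges against the disk of $z_1$ is a sensible addition that the paper leaves implicit. But as written, the proof of part~(i) does not go through.
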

\begin{proof}
Since the drawing of $G_{k-1}$ is planar in $\Gamma_{k}$, it is sufficient to only consider its faces in order to prove statement~(\ref{part1}). To this end, consider any arbitrary face $f$ in $\Gamma_{k-1}'$. If either none or all of the vertices of $f$ are shifted by the same amount, then statement~(\ref{part1}) obviously holds. Consider now the case where $f$ contains at least one vertex that is shifted and one vertex that remains stationary in $\Gamma_{k-1}'$. Suppose, for a contradiction, that $f$ contains an edge $(u,v)$ and a vertex $w$ that is not incident to $(u,v)$ such that $(u,v)$ intersects the disk of $w$ in $\Gamma_{k-1}'$. Since $\Gamma_{k-1}'$ is a grid drawing of $G_{k-1}$, it follows by \cref{prp:slopes} that $(u,v)$ cannot be a black edge. Assume that the slope of $(u,v)$ is negative (i.e., $(u,v)$ is green or red), as the case in which it is positive (i.e., $(u,v)$ is blue) is similar. W.l.o.g., further assume that $u$ is above $v$ in $\Gamma_{k-1}'$ (see \cref{fig:uv}).

Since $\Gamma_{k-1}'$ is a grid drawing of $G_{k-1}$, it follows that, regardless of whether $w$ was shifted or not, $w$ is neither above nor below the \emph{horizontal} strip delimited by the two horizontal lines through $u$ and $v$ in $\Gamma_{k-1}'$ (green in \cref{fig:uv}). Similarly, one observes that $w$ is neither to the left nor to the right of the \emph{vertical} strip delimited by the two vertical lines through $u$ and $v$ (blue in \cref{fig:uv}). It follows that $w$ is either in the interior (yellow in \cref{fig:uv}) or on the boundary of the axis-aligned bounding box $B_{uv}$ of the edge $(u,v)$ in $\Gamma_{k-1}'$. 

We next argue that $w$ can be neither in the interior of  $B_{uv}$ nor along its two vertical sides, which implies that $w$ is necessarily on one of the two horizontal sides of $B_{uv}$ (purple in \cref{fig:uv}). To see this, assume for a contradiction that $w$ is in the interior of  $B_{uv}$ or along one of its two vertical sides but not at its corners. Since $\Gamma_{k-1}$ is a \drawing of $G_{k-1}$ while $\Gamma_{k-1}'$ is not, it follows that the distance between $w$ and $(u,v)$ decreased after the shifting (by one unit) to obtain $\Gamma_{k-1}'$ from $\Gamma_{k-1}$.  If the shifting were sufficiently large (and greater than one unit), then $w$ would be on different sides of $(u,v)$ in $\Gamma_{k-1}$ and in $\Gamma_{k-1}'$, violating the planarity of the drawing (which is implied by \cref{lem:martin-property}); a contradiction.

It follows that $w$ is on any of the two horizontal sides of $B_{uv}$, as we initially claimed. We proceed by considering two subcases depending on whether $(u,v)$ is on the upper or lower envelope of $f$. Consider first the case where $(u,v)$ is on the upper envelope of $f$. By \cref{prp:face}, it follows that $w$ is on the lower envelope of $f$ and, thus, on the lower edge of $B_{uv}$. If $w$ and $v$ are not adjacent, the fact that $w$ and $v$ have the same $y$-coordinate implies that the $x$-coordinate of the bottomost vertex/vertices of $f$ is delimited by $w$ and $v$ (by \cref{lem:same-y-coordinate}). This further implies that $w$ and $v$ are on the left and right envelopes of $f$, respectively. Since not all vertices of $f$ are shifted  in $\Gamma_{k-1}'$, it follows that, among the vertices of the lower envelope of $f$, the ones that are shifted are those in the shift-set of the rightmost vertex of the lower envelope of $f$, which implies that $w$ has not been shifted. On the other hand, if $w$ and $v$ are adjacent, then the edge connecting them is black (by \cref{prp:slopes}), and thus by \cref{lem:black-edge}, we conclude again that $w$ is not shifted. In both cases, however, the edge-vertex resolution of $\Gamma_{k-1}'$ cannot be smaller than the one of $\Gamma_{k-1}$; a contradiction. 

Consider now the case where $(u,v)$ is on the lower envelope of $f$. In this case, vertex $w$ can be either on the lower or on the upper envelope of $f$. The former case can be ruled out by adopting an argument similar to the one of the previous paragraph. In the latter case, vertex $w$ is on the top side of $B_{uv}$. By \cref{prp:face}, vertices $u$ and $w$ are connected by a path of black edges contradicting the fact that $(u,v)$ is green, since, by \cref{prp:face}.\ref{f:6}, the left edge connecting to a black path on the upper envelope~is~blue.

We now prove statement~(\ref{part2}). Assume for a contradiction that an edge $(u,v)$ added in $\Gamma_k$ during the introduction of $P_k$ intersects the disk of a vertex $w$. Clearly, $w$ belongs to $G_{k-1}$, since by construction we have no edge-disk intersections between elements of $P_k$. Thus, one endpoint of $(u,v)$, say $u$, belongs to $P_k$ and the other, say $v$, to $G_{k-1}$, i.e., $(u,v)$ is not a black edge. If $(u,v)$ is green, then its slope in $\Gamma_k$ is $-1$, and hence it cannot intersect any non-adjacent disk. Assume first that $(u,v)$ is blue, and observe that the construction is such that the horizontal distance between $u$ and $v$ is either $0$ or $1$. In the former case, $(u,v)$ is vertical and cannot intersect any non-adjacent vertex-disk. In the latter case, the shifting performed by the algorithm guarantees that the part of the grid column along which vertex $u$ is placed that is contained in the horizontal strip bounded by the horizontal lines through $u$ and $v$ in $\Gamma_k$ contains no vertex of $C_{k-1}$, and therefore edge $(u,v)$ again cannot intersect any non-adjacent vertex disk. The argument for the case in which $(u,v)$ is red is analogous. 
\end{proof}

\noindent We conclude the proof of our main result by analyzing the area of the produced drawings and the time complexity of the algorithm. 

\myparagraph{Proof of \cref*{th:main}.} Let $G$ be an $n$-vertex $3$-connected plane graph with $f$ internal faces. Let $\Gamma$ be a planar drawing of $G$ computed by our algorithm. By \cref{lem:planarity-convexity}, drawing $\Gamma$ is convex, and, by \cref{lem:properties}, its edge-vertex resolution is at least $\frac{1}{2}$. By the contour condition, $\Gamma$ is inside a right isosceles triangle, such that it has a horizontal side (which corresponds to edge $(v_1,v_2)$) and a vertical side (which contains edge $(v_1,v_n)$) that have the same length and meet at point $(0,0)$. In the algorithm by Chrobak and Kant, the value of the width and the height of this triangle is $n-2$~\cite{DBLP:journals/ijcga/ChrobakK97}. The additional unit-shifts due to the introduction of singletons performed by our modification  increase the value of the width and the height by the same amount $a$ (see \cref{obs:chain-shift,obs:singleton-shift}).  We focus on the width of $\Gamma$, and we distinguish two cases: either $\min\{f,n-3\} = n-3$ or $\min\{f,n-3\} = f$. 

\begin{itemize}[--]
\item Assume first that $\min\{f,n-3\} = n-3$. We develop a charging argument that charges each additional one-unit shift to the red edges of $G$. In particular, consider a singleton $P_k$. The additional shifts due to this singleton are two less than its degree in $G_{k}$, which equals the number of red edges incident to $P_k$ in $G_k$. It is immediate to see that each red edge is charged to exactly one additional shift. Hence, the total number of additional shifts is at most the number of red edges in $G$, which is at most $n-3$ (recall that the red subgraph of $G$ is a forest with at most  $n-3$ edges). Consequently, in this case $a \le n-3$.
    
\item Assume now that $\min\{f,n-3\} = f$. In this case we develop a similar charging argument, in which we charge each additional one-unit shift to the internal faces of $G$, rather than to its red edges. Again, consider a singleton $P_k$, and observe that the additional shifts due to this singleton are two less than its degree in $G_{k}$. This value equals the number of internal faces incident to $P_k$ in $G_k$ minus one, in particular, we can avoid charging the shift to the rightmost internal face incident to $P_k$. It is not difficult to see that each internal face is charged to at most one additional shift. Hence, the total number of additional shifts is at most the number of internal faces $f$ in $G$. Consequently, in this case $a \le f$.
\end{itemize}

Finally, we discuss the time complexity. The algorithm by Chrobak and Kant can be implemented to run in linear time~\cite{DBLP:journals/ijcga/ChrobakK97}. In particular, the key ingredient to achieve linear time complexity, is the use of relative coordinates for the vertices, which avoids shifting entire subgraphs. Since our algorithm only requires a linear number of additional one-unit shifts and it does not modify the shift-sets of the vertices, this translates into different relative coordinates and requires neither additional operations nor different data structures. Therefore it can be implemented to also run in linear time. 
\qed

\section{Open Problems}
\label{sec:conclusions}
In this work, we present improvements upon results in~\cite{DBLP:journals/comgeo/BekosGMPST21,DBLP:conf/compgeom/ChrobakGT96}. The following research directions naturally stem from our work.
\begin{enumerate}[(i)]
    \item Can the bounded edge-vertex resolution requirement be incorporated into an area lower bound so to improve the one given in \cite{DBLP:conf/stoc/FraysseixPP88,DBLP:journals/algorithmica/Kant96}?
    \item Can the area bound of \cref{th:main} be improved in the case where the input graph is 4-connected? Note that such graphs admit  $W \times H$ drawings with $W + H \leq n-1$~\cite{DBLP:journals/ijfcs/MiuraNN06} but their edge-vertex resolution may be arbitrarily small.
    \item Finally, it is of interest to study the edge-vertex resolution requirement for strictly convex drawings.
\end{enumerate}

\bibliographystyle{abbrvurl}
\bibliography{references}
\end{document}